\documentclass[12pt]{article}
\usepackage[utf8]{inputenc}
\usepackage{amssymb}
\usepackage{graphicx,epstopdf}
\usepackage{amsthm}
\usepackage{tikz}
\usepackage{subfig}
\usepackage{siunitx}
\RequirePackage[hyphens]{url}
\RequirePackage[colorlinks,citecolor=blue,urlcolor=blue]{hyperref}
\usepackage[numbers]{natbib}
\usepackage[lined,algonl,boxed]{algorithm2e}
\usepackage{paralist}
\usepackage{url}
\usepackage{amsmath}
\usepackage{amsfonts}
\usepackage{color}
\usepackage{afterpage}
\usepackage{lineno}
\usepackage{mathtools}
\usepackage[utf8]{inputenc}
\usepackage[english]{babel}
\usepackage{enumitem}
\usepackage[numbers]{natbib}
 
\newtheorem{theorem}{Theorem}[section]
\newtheorem{Remark}{Remark}[section]

\newtheorem{lemma}{Lemma}[section]

\newcommand{\ceil}[1]{\lceil {#1} \rceil}

\font\n=cmcsc10

\begin{document}
\title{Variational Bayes and truncation approximations for enriched Dirichlet process mixtures

}
\author{Somnath Bhadra, Michael J. Daniels}
\maketitle

Abstract:
A common impediment in conducting inference for Bayesian nonparametric models is either the need for complex MCMC algorithms and/or computational run-time for large datasets.  We propose solutions here for enriched Dirichlet process mixtures (EDPM).  We derive a variational Bayes estimator based on a previously developed truncation approximation for EDPMs.  The variational Bayes estimator can be used in two ways: 1) to develop a more efficient truncation approximation; 2) as good initial values for a blocked Gibbs sampler based on this more efficient truncation approximation or for a polya urn sampler. We derive the accuracy of this more efficient truncation approximation and demonstrate how this allows for simple implementation of a blocked Gibbs Sampler.  We confirm the validity of the approximations by simulations and illustrate on a real data set.

\noindent\textbf{Keywords:} Enriched Dirichlet Process Mixture, Variational Bayes, Bayesian Non-parametric, blocked Gibbs sampling

\section{Introduction}

Dirichlet Process mixture (DPM) and its enriched version, the enriched Dirichlet Process Mixture (EDPM) are popular approaches for Bayesian non-parametric regression and density estimation. MCMC is the standard method to obtain posterior samples and estimate parameters. However, determining good initial values for MCMC can be difficult, and the convergence time can greatly depend on the initial values. 

For the Variational Bayes (VB) approach, instead of generating samples from the posterior distributions, minimization of Kullback-Leibler (KL) divergence of the observed (empirical) distribution of the parameters with a pre-specified, simple family of distribution $\mathbb{Q}$ is used. This concept was first developed by \cite{PetersonAndAnderson1987}~ in a neural network setup. \cite{saul1996mean}~ and \cite{jordan1999introduction}~ extended the idea to a wider class of models. Around the same time, \cite{hinton1993keeping}~ and \cite{neal1998view}~ showed important connections between variational algorithms in the same neural network with the EM (Expectation-Maximization) algorithm, leading to variety of variational inference algorithms for other models as well. Like MCMC, variational inference methods have their roots in statistical physics, but VB is an optimization problem. More recent applications of VB can be found in \cite{Li_2022}~, \cite{Tommaso_2023}~ and \cite{zhong2025variationalinferencelatentvariable}~ .

The DPM approach for non-parametric regression, also called the joint approach, was first introduced by \cite{muller1996bayesian}~, and then further explored by many others including \cite{kang2009clusterwise}~, \cite{shahbaba2009nonlinear}~, \cite{hannah2011dirichlet}~, \cite{park2010bayesian}; and \cite{jara2011dppackage}. The DPM typically uses simple local linear regression models as building blocks and partitions the observed subjects into clusters, where within clusters, the simple regression model provides a good fit. Even though within clusters the model is parametric, globally, a wide range of complex distributions are allowed for the joint distribution, leading to a flexible model for both the regression function and the conditional density.

In this paper, we will derive a variational inference algorithm for EDP mixtures \cite{wade2011enriched}~ based on a truncation of its square-breaking representation \cite{burns2023truncation}, similar to the concept in \cite{ishwaran2001gibbs}, \cite{blei2006variational}~ and in \cite{blei2017variational}~. The algorithm involves the posterior distribution $p$ and a variational distribution, $q$. We define $q$ with variational parameters, and the algorithmic problem is to estimate these parameters so that $q$ approximates $p$. We will use the VB estimates to obtain a more efficient truncation approximation than in \cite{burns2023truncation}. 

As pointed out in \cite{wade2011enriched}~, VB is easier to compute, and often finds a density close to the target; whereas MCMC produces exact samples from the target density. Hence if we use MCMC with initial values from VB, we expect the convergence of the MCMC to be much faster, preferably with a shorter burn-in period. Also, since we control the choice of family of distribution for VB, it can be chosen such that the computations can be made easier and the approximation more accurate.

In this paper, in the section $\ref{sec:desc_EDPM}$, we review the EDPM and truncation approximations. In section $\ref{sec:VB}$ we quickly review VB and then extend that to an EDPM. Next in section $\ref{sec:trunc}$ we derive the accuracy of the truncation approximation and then use VB to find ``reasonable" truncation values based on these results. We also introduced a blocked Gibbs sampler (BGS). In section $\ref{sec:sim}$ we will demonstrate our method with some simulations. Next in section $\ref{sec:application}$ we will fit our method, and compare with different existing methods (EDPM with different truncations, and parametric AFT model). Finally in section $\ref{sec:conclusion}$ we will conclude our findings in this paper and will discuss some potential future works.

\section{The EDPM and a truncation approximation}\label{sec:desc_EDPM}

The EDPM in \cite{wade2011enriched}~ is specified as follows

\begin{equation*}\label{eq:setup}
\begin{split}
    & Y_i|x_i,\theta_i \sim F_y(\cdot|x_i,\theta_i)\\
    & x_i|\psi_i \sim F_x(\cdot|\psi_i)\\
    & \theta_i,\psi_i|P \sim P\\
    & P \sim EDP (\alpha_\theta,\alpha_{\psi|\theta},P_0=P_{0\theta}\times P_{0\psi_\theta})\\
    & \text{where}\\
    & P_\theta \sim DP(\alpha_\theta, P_{0\theta})\\
    & P_{\psi|\theta}(.|\theta) \sim DP(\alpha_{\psi|\theta}(\theta), P_{0\psi|\theta}(.|\theta)).\\
\end{split}   
\end{equation*}

This can be expressed using a square-breaking representation similar to that of a DP (\cite{sethuraman1994constructive},  

\begin{equation*}\label{Eq:Stick_break}
P=\sum_{k=1}^{\infty}\sum_{j=1}^\infty p_k^\theta p_{j|k}^\psi\delta_{\theta_k^*\psi_{j|k}^*}(\cdot),\\ 
\end{equation*}

\noindent where $p_1^\theta=V_1^\theta$ and $p_k^\theta=V_k^\theta\prod_{h=1}^{k-1}(1-V_h^\theta)$ and similarly for $V_{j|k}^\psi$; $V_k\stackrel{iid}{\sim}Beta (1,\alpha_\theta)$ and $V_{j|k}|\theta_k^*\stackrel{iid}{\sim}Beta (1,\alpha_{\psi|\theta}(\theta^*))$; $\theta_k^*\sim P_{0\theta}$, $\psi_j^*|\theta_k^*\sim P_{0\psi|\theta}(\theta_k^*)$.

The use of a truncation approximation of the infinite mixture in stick-breaking representation of DPM by a finite one was first introduced by \cite{ishwaran2001gibbs}. \cite{burns2023truncation} extended this concept to an EDPM setup. Assuming a fixed number of clusters $(N, M)$, they have shown 
$$P_{NM}(\cdot)=\sum_{k=1}^N\sum_{j=1}^M p_k^\theta p_{j|k}^\psi\delta_{\theta_k^*,\psi_{j|k}^*}.$$ $P_{NM}$ converges almost surely to an EDPM with base distribution $P_{0\theta}\times P_{0\psi|\theta}$ and precision parameters $\alpha^\theta$, $\alpha^{\psi|\theta}$.

\section{Variational Bayes for a truncated EDPM}\label{sec:VB}

 Variational Bayes (VB) frames the posterior density estimation problem as an optimization problem of minimizing distance between the observed posterior with that of a family of parameterized distributions, and then approximating it with the corresponding minimizer from the family.

\noindent Here we assume the posterior distributions of each of the components of the EDPM parameters $W=(V_k^\theta, V_{j|k}^\psi, \theta, \psi)$ come from a prefixed family of distributions, say $q_W$. For a fully factorized $q$, we assume 
\begin{equation}\label{eq:factorized}
    q_{_W}(W)=\prod_{i=1}^P q_{w_i}(W_i),
\end{equation}

\noindent where $P$ is the dimension of $W$, and $q_{w_i}$s are a parametric family of distributions.
 The optimal distributions of $W$ from the family $q_W$ is given as

\begin{equation*}
\begin{split}
f^*_{q_{_W}}={\arg\min}_{W\in q_{_W}}KL(f_n^*(.),f_{W}(.)) ,   
\end{split}
\end{equation*}

\noindent where $f_n^*(.)$ is the empirical distribution, $f_{W}(.)$ is a general form of pdf for the family $q_W$, and $KL(f,g)$ is the Kulback-Leibler divergence between densities $f$ and  $g$.

The minimization of KL divergence between these two densities can be computed by maximizing Evidence Lower BOund (ELBO), given as 
\begin{equation*}
    ELBO(q)=E_q(\log p(W,X,Y)|\theta)-E_q(\log q_W(W)),
\end{equation*}

\noindent where $p(W,X,Y)$ is the joint posterior distribution of the parameters $(W)$ and the data $(X,Y)$.

For the EDPM set of parameters,  $W=(V_k^\theta, V_{j|k}^\psi, \theta, \psi)$, we have $p_1^\theta=V_1^\theta$; $p_k^\theta=V_k^\theta\prod_{h=1}^{k-1}(1-V_h^\theta)$ and $p_{1|k}^\psi=V_{1|k}^\psi$; $p_{j|k}^\psi=V_{j|k}^\psi\prod_{h=1}^{j-1}(1-V_{h|k}^\psi)$; $\theta=(\theta_1,\ldots \theta_n)$; $\psi=(\psi_1,\ldots \psi_m)$. The first term in $ELBO(q)$ is

\begin{equation*}\label{eq:ELBO}
\begin{split}
    & E_q(\log p(W,X,Y|\theta^*,\psi^*,\alpha_\theta,\alpha_{\psi|\theta}))\\ & = E_q(\log p(V_k^\theta, V_{j|k}^\psi,\theta,\psi, X_1,\ldots X_n, Y_1, \ldots Y_n|\theta^*,\psi^*,\alpha_\theta,\alpha_{\psi|\theta}))\\
    & = \sum_{k=1}^N\Big[E_q(\log p(V_k^\theta|\alpha_\theta))+\sum_{j=1}^{M} E_q(\log p(V_{j|k}^\psi|\alpha_{\psi|\theta}))\Big] \\
    & + \sum_{i=1}^n E_q(\log p(Y_i|x_i,\theta_i))+\sum_iE_q(\log p(x_i|\psi_i))\\
    & +  \sum_{i=1}^n E_q(\log(p((\theta_i,\psi_i)|(\theta^*,\psi_{.|.}^*))))\\
    & + \sum_{k=1}^N\Big[E_q(\log p_{0\theta}(\theta_k^*))+\sum_{j=1}^{M} E_q(\log p_{0\psi|\theta}(\psi_{j|k}^*|\theta_k^*))\Big], \\
\end{split}
\end{equation*}

\noindent where $p(\theta_i,\psi_i|\theta^*,\psi_{.|.}^*)$ is a multinomial distribution with probabilities $p_k^{\theta^*}p_{j|k}^{\psi*}$ corresponding to pairs $(\theta_k^*,\psi_{j|k}^*)$.

Let $W_{-s}$ denotes the set of parameters from $W$ without the co-ordinate $W_s$. Assume $p(W_s|X,Y, W_{-s},\theta^*, \psi^*,\alpha_\theta, \alpha_{\theta|\psi})$ follows an exponential family, which holds for the EDPM, and is defined as 

\begin{equation*}
\begin{split}
    p(W_s|X,Y, W_{-s},\theta^*, \psi^*,\alpha_\theta, \alpha_{\theta|\psi})&=h_s(W_s)\exp\{g_s(W_{-s},X,Y|\theta^*, \psi^*,\alpha_\theta, \alpha_{\theta|\psi})^TW_s\\&-a_s(g_s(W_{-s},X,Y|\theta^*, \psi^*,\alpha_\theta, \alpha_{\theta|\psi}))\},
\end{split}    
\end{equation*}

\noindent where $g_s$ is the corresponding sufficient statistic of the conditional distribution of $W_s$ given all the other parameters.

\noindent To optimize the ELBO, we use a co-ordinate ascent algorithm with each step as follows;

\begin{equation}\label{eq:update}
\begin{split}
&\forall \hspace{0.1cm} 1\leq s\leq p,\\& \hspace{0.2cm}\nu_s=a_s^{\prime\prime}(\nu_s)^{-1}\frac{\partial}{\partial \nu_s}\Big[E_q(\log p(W_s|X,Y, W_{-s},\theta^*, \psi^*,\alpha_\theta, \alpha_{\theta|\psi})-\log h_s(W_s|\theta^*, \psi^*,\alpha_\theta, \alpha_{\theta|\psi}))\Big]      
\end{split}
\end{equation}
 
\noindent Then $(\ref{eq:update})$ can be simplified as

\begin{equation}\label{eq:update1}
\nu_s=E_q(g_s(W_{-s},X,Y|\theta^*, \psi^*,\alpha_\theta, \alpha_{\theta|\psi}))  
\end{equation}

\noindent and the approximated fully factorized family can be written as 
\begin{equation}\label{eq:prior}
    q_{\nu_s}(W_s)=h_s(W_s)\exp \{\nu_s^TW_s-a_s(\nu_s)\}.
\end{equation}

\noindent To compute the expectation in $(\ref{eq:update1})$, we need the joint distribution of $(W_{-s},X,Y|\theta^*, \psi^*,\alpha_\theta, \alpha_{\theta|\psi})$.

\noindent This can be computed as

\begin{equation*}\label{eq:huh?}
\begin{split}
    p(W_{-s},X,Y|\theta^*, \psi^*,\alpha_\theta, \alpha_{\theta|\psi}) & = \int p(W,(X,Y)|\theta^*, \psi^*,\alpha_\theta, \alpha_{\theta|\psi}) \partial W_s\\
    & = \int \prod_{j=1}^P p_{j}(W_j|\theta^*, \psi^*,\alpha_\theta, \alpha_{\theta|\psi}) \partial W_s\text{  \hspace{0.2cm}           by ($\ref{eq:factorized}$)}\\
    & = \prod_{j=1,j\neq m}^P p_{,j}(W_j|\theta^*, \psi^*,\alpha_\theta, \alpha_{\theta|\psi})
    \int p_{m}(W_s|\theta^*, \psi^*,\alpha_\theta, \alpha_{\theta|\psi}) \partial W_s\\
    & = \prod_{j=1,j\neq m}^P p_{j}(W_j|\theta^*, \psi^*,\alpha_\theta, \alpha_{\theta|\psi})\int h_s(W_s)\exp\Big\{\nu_s^T W_s-a_s(\nu_s)\Big\}\partial W_s\hspace{0.1cm}\\&[\text{by } (\ref{eq:prior}) \text{ and noting that } a_s(\nu_s) \text{ does not depend on } W_s]\\
    & = \Big[\prod_{j=1,j\neq m}^P p_{j}(W_j|\theta^*, \psi^*,\alpha_\theta, \alpha_{\theta|\psi})\Big]\\& \times \exp\{-a_s(\nu_s)\}\int h_s(W_s)\exp\{\nu_s^T W_s\}\partial W_s\\&[\text{Since $\int h_s(W_s)\exp\{\nu_s^T W_s\}\partial W_s$ is still of exponential family form}]
\end{split}
\end{equation*}

\noindent We will show this calculation for each set of parameters in $W$. In what follows we will use the below specifications for the exponential family distribution  of $Y$, $X$ and the pair $(\theta^*,\psi^*)$ which holds for the EDPM specification 

\begin{equation*}\label{eq:simpler}
\begin{split}
    & Y_i|x_i,\theta_i \sim f_y(\cdot|x_i,\theta_i)=h_Y(Y_i)g_Y^{(1)}(x_i)g_Y^{(2)}(\theta_i)\exp\{\sum_{r=1}^{k_Y}{\eta_{Y,r}^{(1)}(x_i)\eta_{Y,r}^{(2)}(\theta_i)T_{Y,r}(Y_i)}\}\\
    & x_i|\psi_i \sim f_x(\cdot|\psi_i)=h_X(x_i)g_X(\psi_i)\exp{\{\sum_{r=1}^{k_x}\eta_{X,r}(\psi_i)T_{X,r}(x_i)\}}\\
    & \theta^*_k \stackrel{iid}{\sim} P_{0\theta}=h_\theta(\theta^*_k)g_\theta(\theta)\exp{\{\sum_{r=1}^{k_\theta}\eta_{\theta,r}(\theta)T_{\theta,r}(\theta^*_k)\}}\\
    & \psi^*_{j|k}\stackrel{iid}{\sim}P_{0\psi|\theta}=h_\psi(\psi^*_j)g_\psi(\psi)\exp{\{\sum_{r=1}^{k_\psi}\eta_{\psi,r}(\psi)T_{\psi,r}(\psi^*_j)\}},\\
\end{split}    
\end{equation*}
\noindent where $k_Y,k_x,k_\theta,k_\psi$ are the dimensions of the corresponding sufficient statistics.

\noindent First take $W_s=\theta_i, \hspace{0.1cm}i \in \{1,2,\ldots n\}$ and derive $$p(W_{-s},X,Y|\theta^*,\psi^*,\alpha_\theta,\alpha_\psi)=p(\theta_1,\ldots \theta_{i-1},\theta_{i+1}, \ldots \theta_N,\psi_{.|\theta},V^\theta,V^\psi,X,Y|\theta^*, \psi^*,\alpha_\theta, \alpha_{\theta|\psi}).$$

\begin{equation}\label{eq:simpler_calculation_theta}
\begin{split}
    &p(W_{-s},X,Y|\theta^*,\psi^*,\alpha_\theta,\alpha_\psi)\\ & = \int (p(W,X,Y|\theta^*,\psi^*,\alpha_\theta,\alpha_\psi) \partial \theta_i\\
    & = \int \prod_{j=1}^P p_{\theta,j}(W_j) \partial \theta_i\\
    & \propto \prod_{j=1,W_j\neq \theta_i}^P p_{\theta,j}(W_j)
    \\&\times \sum_{\theta_i\in \theta^*} [h_Y(Y_i)g_Y^{(1)}(x_i)g_Y^{(2)}(\theta_i)\exp\{\sum_r{\eta_{Y,r}^{(1)}(x_i)\eta_{Y,r}^{(2)}(\theta_i)T_{Y,r}(Y_i)}\} \dfrac{n!}{\theta_i!(n-\sum_{j\neq i}^{k}\theta_j)!}(p^\theta_i/p^\theta_k)^{\theta_i}]\\
    & \hspace{8cm}\{\text{Since } \theta_i \text{ has a multinomial distribution}\}\\
    & = \prod_{j=1,W_j\neq \theta_i}^P p_{\theta,j}(W_j) \times \sum_{\theta_i\in \theta^*}\Big[ \dfrac{g_Y^{(2)}(\theta_i)c_Y^{\sum_r(\eta_{Y,r}^{(2)}(\theta_i))}}{\theta_i!(n-\sum_{j\neq i}^{k}\theta_j)!}(p^\theta_i/p^\theta_k)^{\theta_i}\Big]\\&\hspace{8cm}[c_Y=n! h_Y(Y_i)g_Y^{(1)}(x_i)\exp{\sum_r(\eta_{Y,r}^{(1)}(x_i)T_{Y,r}(Y_i))}]\\
\end{split}
\end{equation}

\noindent In $(\ref{eq:simpler_calculation_theta})$ each of the $p_{\theta,j}(W_j)$ are of exponential family form. Also, $$\sum_{\theta_i\in \theta^*} \dfrac{g_Y^{(2)}(\theta_i)c_Y^{\sum_r(\eta_{Y,r}^{(2)}(\theta_i))}}{\theta_i!(n-\sum_{j\neq i}^{k}\theta_j)!}(p^\theta_i/p^\theta_k)^{\theta_i}$$ is a summation of multinomial distribution terms, hence also in exponential family form. Hence, the distribution of $\theta_i$ conditioning on rest of the parameters in $(\ref{eq:simpler_calculation_theta})$ is an exponential family distribution.

\noindent Now take $W_s=\psi_i,\hspace{0.1cm}i \in \{1,2,\ldots m\}$. 

\begin{equation}\label{eq:simpler_calculation2}
\begin{split}
    & p(W_{-s},X,Y|\theta^*,\psi^*,\alpha_\theta,\alpha_\psi)\\ & = \int (p(W,X,Y|\theta^*,\psi^*,\alpha_\theta,\alpha_\psi) \partial \psi_i\\
    & = \int \prod_{j=1}^M p_{\theta,j}(W_j) \partial \psi_i\\
    & \propto \prod_{j=1,W_j\neq \psi_i}^P p_{\theta,j}(W_j) \times \sum_{\psi_i\in \psi^*} [h_X(x_i)g_X(\psi_i)\exp\{\sum_r{\eta_{X,r}(\psi_i)T_{X,r}(x_i)}\} \dfrac{n!}{\psi_i!(n-\sum_{l\neq i}^{j}\psi_l)!}(p^\psi_i/p^\psi_j)^{\psi_i}]\\
    & \hspace{8cm}\{\text{Since } \psi_i \text{ has a multinomial distribution}\}\\
    & = \prod_{j=1,W_j\neq \psi_i}^P p_{\theta,j}(W_j) \times \sum_{\psi_i\in \psi^*} \Big[\dfrac{g_X(\psi_i)c_X^{\sum_r(\eta_{X,r}(\psi_i))}}{\psi_i!(n-\sum_{l\neq i}^{j}\psi_l)!}(p^\psi_i/p^\psi_j)^{\psi_i}\Big]\hspace{0.5cm}[c_X=n! h_X(x_i)\exp{\sum_r(T_{X,r}(x_i))}].\\
\end{split}
\end{equation}

\noindent In $(\ref{eq:simpler_calculation2})$ since $$\sum_{\psi_i\in \psi^*} \dfrac{g_X(\psi_i)c_X^{\sum_r(\eta_{X,r}(\psi_i))}}{\psi_i!(n-\sum_{l \neq i}^{j}\psi_l)!}(p^\psi_i/p^\psi_j)^{\psi_i}$$ is a summation of multinomial distribution terms, hence also in exponential family form. Hence distribution of $\psi_i$ conditioning on rest of the parameters is an exponential family distribution.

\vspace{0.5cm}
\noindent Now consider $W_s=p^\theta_i, \hspace{0.1cm}i \in \{1,2,\ldots n\}$. First note that the set of equations $p^\theta_k=V^\theta_k\prod_{j=1}^{k-1}(1-V^\theta_j)$ implies $V^\theta_k=\dfrac{p^\theta_k}{1-\sum_{j=1}^{k-1}p^\theta_j}$

Hence,

\begin{align}\label{eq:simpler_calculation_p}
& p(W_{-s}, X, Y | \theta^*, \psi^*, \alpha_0, \alpha_\psi)\notag\\
&= \int p(W,X,Y |\theta^*,\psi^*,\alpha_0,\alpha_\psi)\, d p_i^\theta \notag\\
&= \int \prod_{j=1}^P p_{\theta,j}(W_j)\, d p_i^\theta \notag\\
&\propto \prod_{\substack{j=1 \\ W_j \neq p_i^\theta}}^P p_{\theta,j}(W_j)
   \times \int (1 - p_i^\theta)^{\alpha_0-1}
   \left( \frac{p_i^\theta}{p_k^\theta} \right)^{\theta_i} d p_i^\theta.
\end{align}

\noindent In $(\ref{eq:simpler_calculation_p})$, each of the $p_{\theta,j}(W_j)$ are of exponential family form. Also, since \\ $$\int (1-p^\theta_i)^{\alpha_\theta-1}\Big(\dfrac{p^\theta_i}{p^\theta_k}\Big)^{\theta_i} \partial p^\theta_i$$ is the integral of a $Beta (\theta_i+1,\alpha_\theta)$ distribution (an exponential family member), it is in exponential family form. Hence the distribution of $p^\theta_i$ conditioning on rest of the parameters is an exponential family distribution.

\noindent Finally, consider $W_s=p^\psi_{j|i}$, and note that $V^\psi_{k|l}=\dfrac{p^\psi_{k|l}}{1-\sum_{j=1}^{k-1}p^\psi_{j|l}}$. Hence,

\begin{align}\label{eq:simpler_calculation5}
p(W_{-s}, X, Y | \theta^*, \psi^*, \alpha_0, \alpha_\psi)
&= \int p(W, X, Y | \theta^*, \psi^*, \alpha_0, \alpha_\psi) \, d p^{\psi}_{j|i} \notag \\
&= \int \prod_{j=1}^P p_{\theta,j}(W_j) \, d p^{\psi}_{j|i} \notag \\
&\propto \prod_{\substack{j=1 \\ W_j \neq p^{\psi}_{j|i}}}^P p_{\theta,j}(W_j)
    \times \int (1 - p^{\psi}_{j|i})^{\alpha_\psi - 1}
    \left( \dfrac{p^{\psi}_{j|i}}{p^{\psi}_{k|i}} \right)^{\psi_j}
    d p^{\psi}_{j|i}.
\end{align}

\noindent In $(\ref{eq:simpler_calculation5})$, each of the $p_{\theta,j}(W_j)$ are of exponential family form. Furthermore, since $$\int (1-p^\psi_{j|i})^{\alpha_\psi-1}\Big(\dfrac{p^\psi_{j|i}}{p^\psi_{k|i}}\Big)^{\psi_j} \partial p^\psi_{j|i}$$, is an  integral of a $Beta(\psi_j+1,\alpha_\psi)$ distribution (an exponential family member), it is in exponential family form. Hence the distribution of $p^\psi_{j|i}$ conditioning on rest of the parameters is an exponential family distribution.

\section{A New Truncation Approximation}\label{sec:trunc}

The validity of the approximation of stick-breaking method for DPM is shown in \cite{ishwaran2001gibbs}. Using a similar idea from \cite{burns2023truncation}~ and reviewed in Section $\ref{sec:desc_EDPM}$, we will extend to the case with $M_k\neq M$ which will be more efficient as we often do not need same number of $X$ clusters for each $Y$ cluster.
\subsection{Truncation}

Define 
\begin{equation}\label{eq:truncEDP}
P_N^{\mathbb{M}}(\cdot)=\sum_{i=1}^N\sum_{j=1}^{M_k}p_k^\theta p_{j|k}^\psi\delta_{\theta_k^*\psi_{j|k}^*}(\cdot)   
\end{equation}

and 
\begin{equation}\label{eq:trueEDP}
 P_\infty(\cdot)= EDP(\alpha^\theta,\alpha^{\psi|\theta},P_0)= \sum_{i=1}^\infty \sum_{j=1}^\infty \pi_k^\theta \pi_{j|k}^\psi\delta_{\theta_k^*\psi_{j|k}^*}(\cdot),  
\end{equation}

\noindent where $\pi_1^\theta=Z_1^\theta$; $\pi_k^\theta=Z_k^\theta\prod_{h=1}^{k-1}(1-Z_h^\theta)$; similarly for $\pi_{j|k}^{\psi}$; $Z_k^\theta \sim Beta(1,\alpha_\theta),\hspace{0.1cm}k=2,3,\ldots$; $\theta^*_k\stackrel{iid}{\sim} P_{0\theta}, \hspace{0.1cm} k=1,2,\ldots$; and for each $k$, $Z_{j|k}^{\psi}|\theta^*_k\stackrel{iid}{\sim}Beta(1,\alpha^{\psi|\theta}(\theta_k^*))$ and $\psi^*_{j|k}|\theta^*_k\stackrel{iid}{\sim}P_{0\psi|\theta}(.|\theta_k^*), \hspace{0.1cm}j=1,2,\ldots$.

\noindent Denote by $D(\mathbb{P}_1,\mathbb{P}_2)$ the total variation distance between two probability measures $\mathbb{P}_1$ and $\mathbb{P}_2$. Then, $P_N^{\mathbb{M}}(\cdot)$ is a ``good" approximation for $P_\infty$ if the difference can be made as small as possible almost surely. 

\begin{theorem}
For $M_k\to\infty\hspace{0.1cm}\forall \hspace{0.1cm}1\leq k\leq N$ and $N\to \infty$

\begin{equation*}
    D(P_\infty,P_N^{\mathbb{M}}) \to 0 \hspace{0.5cm} w.p. 1
\end{equation*}
    
\end{theorem}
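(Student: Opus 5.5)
We couple the two random measures on a common probability space: the truncated measure $P_N^{\mathbb{M}}$ in (\ref{eq:truncEDP}) is built from the same stick-breaking variables $Z_k^\theta$, $Z_{j|k}^\psi$ and atoms $\theta_k^*$, $\psi_{j|k}^*$ as $P_\infty$ in (\ref{eq:trueEDP}). This is the standard construction of \cite{ishwaran2001gibbs} extended as in \cite{burns2023truncation}. The last stick in each direction is set to one, i.e.\ $Z_N^\theta=1$ and $Z_{M_k|k}^\psi=1$, so that $P_N^{\mathbb{M}}$ is a probability measure. Then $p_k^\theta=\pi_k^\theta$ for $k<N$, and $p_N^\theta=\prod_{h<N}(1-Z_h^\theta)=\sum_{k\ge N}\pi_k^\theta$. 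Similarly, $p_{j|k}^\psi=\pi_{j|k}^\psi$ for $j<M_k$, and $p_{M_k|k}^\psi=\sum_{j\ge M_k}\pi_{j|k}^\psi$.

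The total variation distance between two discrete measures with shared atoms is bounded by the $\ell_1$ distance of their weights. We therefore bound
\begin{equation*}
D(P_\infty,P_N^{\mathbb{M}})\le \sum_{k=1}^{N-1}\pi_k^\theta\sum_{j}\bigl|\pi_{j|k}^\psi-p_{j|k}^\psi\bigr| + \Bigl|\sum_{k\ge N}\pi_k^\theta\Bigr|\cdot 2 .
\end{equation*}
For $k<N$, the inner sum equals the error from collapsing the $\psi$-tail onto atom $M_k$, which is at most $2\sum_{j\ge M_k}\pi_{j|k}^\psi=2\prod_{h<M_k}(1-Z_{h|k}^\psi)$. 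The $k\ge N$ block contributes at most $2\prod_{h<N}(1-Z_h^\theta)$. Hence
\begin{equation*}
D(P_\infty,P_N^{\mathbb{M}})\le 2\sum_{k=1}^{N-1}\pi_k^\theta\prod_{h=1}^{M_k-1}(1-Z_{h|k}^\psi)+2\prod_{h=1}^{N-1}(1-Z_h^\theta).
\end{equation*}

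Next we show each tail product vanishes almost surely. Since $Z_h^\theta\stackrel{iid}{\sim}Beta(1,\alpha_\theta)$, we have $\sum_h -\log(1-Z_h^\theta)=\infty$ a.s.\ by the strong law of large numbers. Indeed, $-\log(1-Z_h^\theta)$ is i.i.d.\ Exponential with rate $\alpha_\theta$, hence has positive mean. So $\prod_{h<N}(1-Z_h^\theta)\to0$ a.s. The same argument applies conditionally on $\theta_k^*$ for each $k$: given $\theta_k^*$, the variables $Z_{h|k}^\psi$ are i.i.d.\ $Beta(1,\alpha^{\psi|\theta}(\theta_k^*))$ with $\alpha^{\psi|\theta}(\theta_k^*)<\infty$. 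Hence $R_k(M):=\prod_{h<M}(1-Z_{h|k}^\psi)\to0$ a.s.\ as $M\to\infty$. Intersecting countably many probability-one events gives a single full-measure event on which all of these hold simultaneously.

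The main obstacle is the double limit. The number of $k$ terms grows with $N$, and the $M_k$ may depend on $N$, so termwise convergence is not enough. On the full-measure event, fix $\epsilon>0$ and choose $K$ with $\sum_{k\ge K}\pi_k^\theta<\epsilon$, which is possible because $\sum_k\pi_k^\theta=1$ a.s. Then split the sum at $K$. The terms with $k\ge K$ contribute at most $2\sum_{k\ge K}\pi_k^\theta<2\epsilon$, using $R_k\le1$. The finitely many terms with $k<K$ go to zero because each $M_k\to\infty$. The last product goes to zero as $N\to\infty$. Hence $\limsup D\le 2\epsilon$ a.s.; letting $\epsilon\downarrow0$ along a countable sequence gives the theorem. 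The interpretation of ``$M_k\to\infty$ for all $k$'' needed here is $\min_{k<K}M_k\to\infty$ for each fixed $K$, which is what the splitting uses.
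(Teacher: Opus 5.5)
Your proof is correct and follows the paper's route. Both couple $P_N^{\mathbb{M}}$ and $P_\infty$ through shared sticks and atoms, split the discrepancy into the $\theta$-tail block $k\ge N$ and the collapsed $\psi$-tails for each $k<N$, bound each piece by its mass, and send those masses to zero. Your version is more rigorous than the paper's on two points it only asserts: you give a genuine almost-sure argument via the i.i.d.\ exponential variables $-\log(1-Z_h)$ (the paper writes $\stackrel{p}{\to}$), and you control the growing sum $\sum_{k<N}\pi_k^\theta R_k(M_k)$ by splitting at a random $K$ (the paper simply declares this sum negligible).
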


\begin{proof}

\begin{equation*}\label{eq:approxworks}
\begin{split}
  D(P_\infty,P_N^{\mathbb{M}}) & = \sup_A\Big|\Big (\sum_{k=N}^\infty\sum_{j=1}^\infty \pi_k^\theta \pi_{j|k}^\psi\delta_{\theta_k^*\psi_{j|k}^*}(A)-\sum_{j=1}^{M_N} p_N^\theta p_{j|N}^\psi\delta_{\theta_N^*\psi_{j|N}^*}\Big)\\
  & +\sum_{k=1}^{N-1}\Big[\sum_{j=M_k}^\infty \pi_k^\theta \pi_{j|k}^\psi\delta_{\theta_k^*\psi_{j|k}^*}(A)-p_k^\theta p_{M_k|k}^\psi\delta_{\theta_k^*\psi_{M_k|k}^*}(A)\Big]\Big|\hspace{0.5cm}[\text{from $(\ref{eq:truncEDP})$ and $(\ref{eq:trueEDP})$}]\\
  & \leq \sup_A\Big|\sum_{k=N}^\infty\sum_{j=1}^\infty \pi_k^\theta \pi_{j|k}^\psi\delta_{\theta_k^*\psi_{j|k}^*}(A)\Big|+\sup_A\Big|\sum_{j=1}^{M_N} p_N^\theta p_{j|N}^\psi\delta_{\theta_N^*\psi_{j|N}^*}\Big|\\
  & +\sum_{k=1}^{N-1}\Big[\sup_A\Big|\sum_{j=M_k}^\infty \pi_k^\theta \pi_{j|k}^\psi\delta_{\theta_k^*\psi_{j|k}^*}(A)\Big|+\sup_A\Big|p_k^\theta p_{M_k|k}^\psi\delta_{\theta_k^*\psi_{M_k|k}^*}(A)\Big|\Big]\\
  & = \sum_{k=N}^\infty \pi_k^\theta\sum_{j=1}^\infty \pi_{j|k}^\psi+p_N^\theta\sum_{j=1}^{M_N}p_{j|N}^\psi+\sum_{k=1}^{N-1}\Big[\pi_k^\theta\sum_{j=M_k}^\infty\pi_{j|k}^\psi+p_k^\theta p_{M_k|k}^\psi\Big]\\
  & = 1-\sum_{k=1}^{N-1}\pi_k^\theta + \Big(1-\sum_{k=1}^{N-1}\pi_k^\theta\Big)\sum_{j=1}^{M_N}p_{j|N}^\psi+\sum_{k=1}^{N-1}\Big[\pi_k^\theta\Big(1-\sum_{j=1}^{M_k}\pi_{j|k}^\psi\Big)+p_k^\theta\Big(1-\sum_{j=1}^{M_k}\pi_{j|k}^\psi\Big)\Big]\\
  & \stackrel{p}{\to} 2\Big(1-\sum_{k=1}^{N-1}\pi_k^\theta\Big) \to 0.\\
\end{split}    
\end{equation*}

\noindent Hence the square-breaking approximation with varying cluster sizes converges to its infinite mixture version.
\end{proof}

Now, using the square-breaking truncation approximation, the joint marginal density of $y$ and $x$ can be expressed as 

\begin{equation}\label{eq:defineM}
 m_N^{\mathbb{M}}(y,x)= \int\Big[ \prod_{i=1}^n\int f(y_i|x_i,\theta_i)f(x_i|\psi_i)dP(\theta_i,\psi_i)\Big]dP_n^{\mathbb{M}}(P)   
\end{equation}

\noindent Here we try to find a similar $L_1$ bound to the difference of $m_N^{\mathbb{M}}(\cdot)$ and $m_\infty(\cdot)$ as in \cite{burns2023truncation}, where $m_\infty(\cdot)$ denotes the marginal for true EDPM model. Denote the distribution under $P_n^{\mathbb{M}}$ as $\pi_n^{\mathbb{M}}$ and $P_\infty$ as $\pi_\infty$. From ($\ref{eq:defineM}$), we have the following theorem.

\begin{theorem}\label{thm:inequality}
\begin{equation}\label{eq:ineqality_thm}
\begin{split}
\int \Big|m_N^{\mathbb{M}}(y,x)-m_\infty(y,x)\Big|d(y,x) 
& \leq 4n \Big[\exp\Big(-\frac{N-1}{\alpha^\theta}\Big)\\
& +\exp\Big(-\min_k\Big\{\frac{M_k-1}{\alpha_k^{\psi|\theta}}\Big\}\Big)\Big)\Big(1-\exp{\Big(-\frac{N-1}{\alpha^\theta}\Big)}\Big)\Big]\\
\end{split}
\end{equation}

\noindent as $M_k\to \infty$ for each $k$ and $N\to \infty$. The RHS of the inequality can be made arbitrarily small.
\end{theorem}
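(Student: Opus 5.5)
We follow the argument of \cite{ishwaran2001gibbs} for the DPM, as extended by \cite{burns2023truncation} to the EDPM. The idea is a coupling of the labels. Generate the cluster labels $(K_i, J_i)$, $i=1,\dots,n$, under both $P_N^{\mathbb{M}}$ and $P_\infty$ from the same stick-breaking variables $Z^\theta_k, Z^\psi_{j|k}$ and the same atoms $\theta_k^*, \psi_{j|k}^*$. The truncated version sets $V^\theta_N=1$ and $V^\psi_{M_k|k}=1$. Both marginals can then be written as integrals of the same kernel $\prod_i f(y_i|x_i,\theta_{K_i}^*)f(x_i|\psi^*_{J_i|K_i})$ against the respective label distributions. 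Integrating over $(y,x)$ first, the $L_1$ distance between the marginals is bounded by twice the total variation distance between the joint label laws. That in turn is at most $2\,\Pr\{\text{some } (K_i,J_i) \text{ falls outside the truncated index set}\}$, i.e. outside $\{(k,j): k<N \text{ or } (k= N),\ j< M_k\}$, with the boundary indices absorbing the leftover mass.

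By a union bound over the $n$ observations, the $L_1$ distance is at most $2n$ times the probability that a single label escapes. For one observation, that probability equals
\[
E\Big[\sum_{k\ge N}\pi^\theta_k\Big]+\sum_{k<N}E\Big[\pi^\theta_k\sum_{j\ge M_k}\pi^\psi_{j|k}\Big].
\]
The first term is $E\prod_{h=1}^{N-1}(1-Z_h^\theta)=\big(\alpha^\theta/(1+\alpha^\theta)\big)^{N-1}\le \exp(-(N-1)/(1+\alpha^\theta))$. The second uses conditional independence of the $\psi$-sticks given $\theta_k^*$. It equals $\sum_{k<N}E[\pi_k^\theta]\,\big(\alpha_k^{\psi|\theta}/(1+\alpha_k^{\psi|\theta})\big)^{M_k-1}$. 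Bounding each factor by the worst $k$ (the minimum over $k$ of $(M_k-1)/\alpha_k^{\psi|\theta}$-type exponents) gives $\exp(-\min_k\{\cdot\})\sum_{k<N}E\pi_k^\theta$. Finally $\sum_{k<N}E\pi^\theta_k=1-(\alpha^\theta/(1+\alpha^\theta))^{N-1}$, which produces the factor $(1-\exp(-(N-1)/\alpha^\theta))$ in the form stated.

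The constant $4n$ rather than $2n$ is handled the same way as in the proof of the previous theorem. Mass can be misplaced both through the tail of the infinite mixture and through the extra mass placed on the boundary atoms of the truncation. Each contributes a factor of the same escape probability, so the bound doubles.

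\textbf{Main obstacle: matching the exact exponential form.} Converting the ratio $(\alpha/(1+\alpha))^{m}$ to an exponential uses $\log(1+1/\alpha)\ge 1/(1+\alpha)$. Getting exactly $\exp(-m/\alpha)$ as written requires care, or accepting that the statement holds asymptotically as $M_k,N\to\infty$, which is how the theorem is phrased. The same issue makes the factor $1-\exp(-(N-1)/\alpha^\theta)$ delicate, since it must bound $1-(\alpha^\theta/(1+\alpha^\theta))^{N-1}$ from the correct side. I would first try to establish the bound with the exact ratios, then invoke the limit $M_k,N\to\infty$ to pass to the exponential expressions. Once that is done, each term tends to zero because $n$ is fixed, so the right-hand side can be made arbitrarily small.
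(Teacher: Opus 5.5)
Your skeleton matches the paper's. Coupling the labels through shared sticks and atoms gives $\int|m_N^{\mathbb{M}}-m_\infty|\,d(y,x)\le 2\Pr(\text{some label escapes})$. Your union bound is the rigorous form of the paper's first-order step $1-\mathbb{E}(1-X)^n\approx n\,\mathbb{E}X$, since $1-(1-x)^n\le nx$ on $[0,1]$. Two of your worries are not real:
\begin{itemize}
\item The coupling already gives the constant $2n$, so no ``doubling'' argument is needed; $4n$ is just weaker. The paper's $4n$ comes from its cruder bound $D\le 2(1-\pi_n^{\mathbb{M}}(C))$.
\item The factor $1-e^{-(N-1)/\alpha^\theta}$ is harmless. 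From $\log(1+1/\alpha)\le 1/\alpha$ you get $1-(\alpha^\theta/(1+\alpha^\theta))^{N-1}\le 1-e^{-(N-1)/\alpha^\theta}$ exactly.
\end{itemize}

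The genuine gap is the last step, and it cannot be closed. We have $(\alpha/(1+\alpha))^{m}=e^{-m/\alpha}\exp\{m(1/\alpha-\log(1+1/\alpha))\}$ with $1/\alpha-\log(1+1/\alpha)>0$. So your exact terms $\mathbb{E}p_N^\theta$ and $\max_k(\alpha_k^{\psi|\theta}/(1+\alpha_k^{\psi|\theta}))^{M_k-1}$ exceed $e^{-(N-1)/\alpha^\theta}$ and $e^{-\min_k(M_k-1)/\alpha_k^{\psi|\theta}}$ by factors that grow exponentially in $N$ and $M_k$. Letting $N,M_k\to\infty$ widens the mismatch rather than removing it, and the ``worst $k$'' step fails for the same reason.

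In fact the displayed inequality is false in general. Take $n=2$, $f(x\mid\psi)$ free of $\psi$, $f(y\mid x,\theta)=N(y;\theta,\sigma^2)$ and $P_{0\theta}=N(0,1)$. The two marginals then differ only through the probability of a $\theta$-tie. That difference is $\mathbb{E}[(p_N^\theta)^2]\,\alpha^\theta/(1+\alpha^\theta)=(\alpha^\theta/(\alpha^\theta+2))^{N-1}\alpha^\theta/(1+\alpha^\theta)$, so as $\sigma\to0$ the $L_1$ distance tends to twice this. For $\alpha^\theta=0.1$ and $N=2$ that is about $8.7\times10^{-3}$. The stated right-hand side is about $8e^{-10}\approx3.6\times10^{-4}$ once the $M_k$ are large, and the ratio grows with $N$.

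The paper reaches its exponents only through the ``$\approx$'' in $\mathbb{E}\exp(-\sum_k E_k^\theta/\alpha^\theta)\approx\exp(-(N-1)/\alpha^\theta)$ and its analogue for $p^\psi_{M_k|k}$. By Jensen the right side is a lower bound for the left. So the theorem is a heuristic approximation, not a provable inequality.

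What your argument does establish rigorously is the bound with constant $2n$ and exponents $(N-1)\log(1+1/\alpha^\theta)$ and $\min_k(M_k-1)\log(1+1/\alpha_k^{\psi|\theta})$. That bound still tends to zero, so the ``arbitrarily small'' conclusion survives. That is the version you should state, rather than trying to pass to $e^{-m/\alpha}$.
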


\begin{proof}

\begin{equation}\label{eq:relate to TV}
\begin{split}
\int \Big|m_N^{\mathbb{M}}(y,x)-m_\infty(y,x)\Big|d(y,x) & = \int \Big|\int\prod_{i=1}^n f(y_i|x_i,\theta_i)f(x_i|\psi_i)(d\pi_n^{\mathbb{M}}(\theta,\psi)-d\pi_\infty(\theta,\psi))\Big| d(y,x)\\ 
& \leq \iint \prod_{i=1}^n f(y_i|x_i,\theta_i)f(x_i|\psi_i) \Big|(d\pi_n^{\mathbb{M}}(\theta,\psi)-d\pi_\infty(\theta,\psi))\Big| d(y,x)\\
& = \int\Big[\prod_{i=1}^n f(y_i|x_i,\theta_i)f(x_i|\psi_i) d(y,x)\int\Big] \Big|(d\pi_n^{\mathbb{M}}(\theta,\psi)-d\pi_\infty(\theta,\psi))\Big|\\
& = \int \Big|(d\pi_n^{\mathbb{M}}(\theta,\psi)-d\pi_\infty(\theta,\psi))\Big|\\
& = 2 D(\pi_n^{\mathbb{M}},\pi_\infty),\\
\end{split}
\end{equation}

\noindent where the last line follows from the usual definition of total variation distance.

Writing any $(\theta_i,\psi_i)$ observed pair as $(\theta_{K_i},\psi_{J_i|K_i})$, for $K_i<N$, $J_i<M_{K_i}$, and denoting $$C=\{K_i<N,J_i<M_{K_i},i=1,\ldots n\},$$ we can write;

\begin{equation}\label{eq:boundary}
\begin{split}
D(\pi_n^{\mathbb{M}},\pi_\infty) & = \sup_A \Big|\pi_n^{\mathbb{M}}(A)-\pi_\infty(A)\Big|\\
& = \sup_A \Big|\pi_n^{\mathbb{M}}(A\cap C^c)-\pi_\infty(A\cap C^c)\Big|\\
& \leq \sup_A \pi_n^{\mathbb{M}}(A\cap C^c) + \sup_A \pi_\infty(A\cap C^c)\\
& = 2(1-\pi_n^{\mathbb{M}}(C))\\
& = 2\Big[1-\mathbb{E}\Big(\sum_{k=1}^{N-1}p_k^\theta\sum_{j=1}^{M_k-1}p_{j|k}^\psi\Big)^n\Big].\\
\end{split}    
\end{equation}

Thus the difference between $P_N^\mathbb{M}$ and $P_\infty$ in $L_1$ is bounded by 
$2\Big[1-\mathbb{E}\Big(\sum_{k=1}^{N-1}p_k^\theta\sum_{j=1}^{M_k-1}p_{j|k}^\psi\Big)^n\Big]$.

Now note
\begin{equation}\label{eq:approx_expec}
\begin{split}
    \mathbb{E}\Big[\Big(\sum_{k=1}^{N-1}p_k^\theta\sum_{j=1}^{M_k-1}p_{j|k}^\psi\Big)\Big]^n & = \mathbb{E}\Big[\Big(\sum_{k=1}^{N-1}p_k^\theta(1-p_{M_k|k}^\psi)\Big)^n\Big]\\
    & = \mathbb{E}\Big[\Big(1-p_N^\theta-\sum_{k=1}^{N-1}p_k^\theta p_{M_k|k}^\psi\Big)^n\Big]\\
    & \approx \mathbb{E}\Big[1-n\Big(p_N^\theta + \sum_{k=1}^{N-1}p_k^\theta p_{M_k|k}^\psi\Big)\Big].\\
\end{split}
\end{equation}

And noting the form of $p_N^\theta$, we have 

\begin{equation}\label{eq:approx_k}
\begin{split}
    \mathbb{E}(p_N^\theta)& =\mathbb{E}\Big[\prod_{k=1}^{N-1}(1-V_k^\theta)\Big]\\
    & = \mathbb{E}\Big[\exp{\Big(\sum_{k=1}^{N-1}-\frac{1}{\alpha^\theta}E_k^\theta\Big)}\Big] \hspace{0.5cm}(\text{where }E_k^\theta \stackrel{iid}{\sim}\exp{(1)})\\
    & \approx \exp{\Big(-\frac{N-1}{\alpha^\theta}\Big)}.\\
\end{split}    
\end{equation}

Similarly,

\begin{equation}\label{eq:approx_each j}
\begin{split}
    \mathbb{E}(p_k^\theta p_{M_k|k}^\psi) & = \mathbb{E}\mathbb{E}\Big[p_k^\theta p_{M_k|k}^\psi\Big|p_1^\theta,\ldots p_{N-1}^\theta,\theta_1^*,\ldots \theta_N^*\Big]\\
    & = \mathbb{E}\Big[p_k^\theta\mathbb{E}\Big(p_{M_k|k}^\psi\Big|\theta_k^*\Big)\Big]\\
    & = \mathbb{E}\Big[p_k^\theta\mathbb{E}\Big(\prod_{j=1}^{M_k-1}(1-V_{j|k}^\psi(\theta_k^*))\Big)\Big]\\
    & = \mathbb{E}\Big[p_k^\theta \times \exp{-\frac{1}{\alpha^{\psi|\theta}}\sum_{j=1}^{M_k-1}E_{j|k}^\psi(\theta^*)}\Big] \hspace{0.5cm}(\text{where }E_{j|k}^\psi(\theta^*) \stackrel{iid}{\sim}\exp{(1)})\\
    & \approx \mathbb{E}\Big(p_k^\theta \times \exp{\Big(-\frac{M_k-1}{\alpha_k^{\psi|\theta}}\Big)}\Big) \\
    & = \exp{\Big(-\frac{M_k-1}{\alpha_k^{\psi|\theta}}\Big)}\mathbb{E}(p_k^\theta).
\end{split}   
\end{equation}

\noindent Thus from $(\ref{eq:boundary})$, $(\ref{eq:approx_expec})$, $(\ref{eq:approx_k})$ and $(\ref{eq:approx_each j})$ we have 

\begin{equation}\label{eq:final_bound}
\begin{split}
    D(\pi_n^{\mathbb{M}},\pi_\infty) & \leq 2n\Big[\exp\Big(-\frac{N-1}{\alpha^\theta}\Big)+\sum_{k=1}^{N-1}\exp\Big(-\frac{M_k-1}{\alpha_k^{\psi|\theta}}\Big)\mathbb{E}(p_k^\theta)\Big)\Big]\\
    & \leq 2n \Big[\exp\Big(-\frac{N-1}{\alpha^\theta}\Big)+\exp\Big(-\min_k\Big\{\frac{M_k-1}{\alpha_k^{\psi|\theta}}\Big\}\Big)\sum_{k=1}^{N-1}\mathbb{E}(p_k^\theta)\Big)\Big]\\
    & = 2n \Big[\exp\Big(-\frac{N-1}{\alpha^\theta}\Big)+\exp\Big(-\min_k\Big\{\frac{M_k-1}{\alpha_k^{\psi|\theta}}\Big\}\Big)\mathbb{E}(1-p_N^\theta)\Big)\Big]\\
    & \approx 2n \Big[\exp\Big(-\frac{N-1}{\alpha^\theta}\Big)+\exp\Big(-\min_k\Big\{\frac{M_k-1}{\alpha_k^{\psi|\theta}}\Big\}\Big)\Big)\Big(1-\exp{\Big(-\frac{N-1}{\alpha^\theta}\Big)}\Big)\Big].\\
\end{split}    
\end{equation}

\noindent Combining $(\ref{eq:relate to TV})$ and $(\ref{eq:final_bound})$, Theorem $\ref{thm:inequality}$ follows.

\end{proof}

\noindent We can also use Theorem $\ref{thm:inequality}$ to find a $\mathcal{L}_1$ bound on the posterior distribution of the clustering. 

\begin{lemma}\label{lem:L1}
\begin{equation*}
\begin{split}
&\int_{\mathbb{R}^n\times \mathbb{R}^n}\Big[\sum_{K\in K_\infty}\sum_{J\in J_\infty}\Big|\pi_{N}^\mathbb{M}(K,J|y,X)-\pi_{\infty}(K,J|y,x)\Big|\Big]m_\infty(y,x)d(y,x)\\
& = O\Big(n \Big[\exp\Big(-\frac{N-1}{\alpha^\theta}\Big)+\exp\Big(-\min_k\Big\{\frac{M_k-1}{\alpha_k^{\psi|\theta}}\Big\}\Big)\Big)\Big(1-\exp{\Big(-\frac{N-1}{\alpha^\theta}\Big)}\Big)\Big]\Big),           
\end{split}
\end{equation*}
\end{lemma}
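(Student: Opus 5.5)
This lemma follows the Ishwaran--James style argument for posterior cluster allocations, as used in \cite{burns2023truncation}. The plan is to reduce the weighted $\mathcal{L}_1$ distance between posterior clustering distributions to the $\mathcal{L}_1$ distance between joint laws of $(K,J,y,x)$. That distance is then controlled by the total variation bound $D(\pi_n^{\mathbb{M}},\pi_\infty)$ obtained in the proof of Theorem~\ref{thm:inequality}.

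First, write each posterior by Bayes' rule:
\[
\pi_{N}^{\mathbb{M}}(K,J\mid y,x)=\frac{\pi_N^{\mathbb{M}}(K,J)\,f(y,x\mid K,J)}{m_N^{\mathbb{M}}(y,x)},\qquad
\pi_{\infty}(K,J\mid y,x)=\frac{\pi_\infty(K,J)\,f(y,x\mid K,J)}{m_\infty(y,x)}.
\]
Here $f(y,x\mid K,J)$ is the marginal likelihood given the allocation. After integrating out the atoms $\theta^*_k,\psi^*_{j|k}$ it is the same under both models, since both draw atoms i.i.d.\ from $P_{0\theta}\times P_{0\psi|\theta}$. Multiplying the integrand by $m_\infty(y,x)$, add and subtract $\pi_N^{\mathbb{M}}(K,J)f(y,x\mid K,J)$ and apply the triangle inequality. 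The integrand is then bounded by
\[
\bigl|\pi_N^{\mathbb{M}}(K,J)-\pi_\infty(K,J)\bigr|\,f(y,x\mid K,J)+\pi_N^{\mathbb{M}}(K,J\mid y,x)\,\bigl|m_\infty(y,x)-m_N^{\mathbb{M}}(y,x)\bigr|.
\]
Summing over $(K,J)$ and integrating over $(y,x)$ handles each term separately. Since $\int f(y,x\mid K,J)\,d(y,x)=1$, the first term gives $\sum_{K,J}|\pi_N^{\mathbb{M}}(K,J)-\pi_\infty(K,J)|\le 2D(\pi_n^{\mathbb{M}},\pi_\infty)$. Since posterior probabilities sum to one, the second term gives $\int|m_N^{\mathbb{M}}-m_\infty|\,d(y,x)$. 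Theorem~\ref{thm:inequality} bounds the latter directly.

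For the first term, I would reuse the argument of (\ref{eq:boundary}) with $C=\{K_i<N,\,J_i<M_{K_i}\}$. Off $C$ the truncated prior puts all its leftover mass on the terminal labels. On $C$ the two allocation laws should agree, because the stick weights for labels below the truncation level have the same law. The discrepancy is therefore at most twice the prior mass of $C^c$. Combining (\ref{eq:approx_expec})--(\ref{eq:final_bound}), this is of order $n\bigl[e^{-(N-1)/\alpha^\theta}+e^{-\min_k (M_k-1)/\alpha_k^{\psi|\theta}}(1-e^{-(N-1)/\alpha^\theta})\bigr]$. Adding the two terms multiplies the constant by at most about $8$, which is absorbed into $O(\cdot)$.

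The main obstacle is justifying the claim that the allocation laws agree on $C$, and ensuring that the approximations ``$\approx$'' in (\ref{eq:approx_expec})--(\ref{eq:approx_each j}) are genuine upper bounds up to constants. For the first, I would argue that the joint law of $(p^\theta_1,\dots,p^\theta_{N-1})$ and $(p^\psi_{1|k},\dots,p^\psi_{M_k-1|k})$ coincides with that of the untruncated weights. For the second, I would use $1-(1-a)^n\le na$ in place of the Taylor expansion, and the exact identity $\mathbb{E}\prod_{h=1}^{N-1}(1-V_h)=(\alpha/(1+\alpha))^{N-1}\le e^{-(N-1)/(1+\alpha)}$. This yields an exponential bound of the stated form, possibly with a slightly adjusted rate absorbed into the $O$ notation.
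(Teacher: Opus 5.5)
Your decomposition is the paper's: Bayes' rule, adding and subtracting $\pi_N^{\mathbb{M}}(K,J)f(y,x\mid K,J)$, then controlling $\int|m_N^{\mathbb{M}}-m_\infty|$ and the prior allocation discrepancy through the mass of $C^c$. Your single pointwise inequality holds for every $(K,J)$, including those with $\pi_N^{\mathbb{M}}(K,J)=0$. That is tidier than the paper's three-way split, where two of the pieces are dispatched by ``a similar argument''.

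The agreement of the allocation laws on $C$ that you flag as the main obstacle is not a problem. The sticks with indices below the truncation levels have identical joint laws, so the prior allocation probabilities coincide pointwise on $C$.

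One small correction. The $C$-argument bounds the total variation between the prior laws of the labels $(K,J)$, which is exactly what you need. By contrast, $D(\pi_n^{\mathbb{M}},\pi_\infty)$, read as a distance between laws of the values $(\theta_i,\psi_i)$, is by data processing at most this quantity, not at least. So drop the intermediate ``$\le 2D(\pi_n^{\mathbb{M}},\pi_\infty)$''.

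The step that fails is your last one.

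\begin{itemize}
\item A change of exponential rate cannot be absorbed into $O(\cdot)$: $e^{-(N-1)/(1+\alpha^\theta)}/e^{-(N-1)/\alpha^\theta}=e^{(N-1)/(\alpha^\theta(1+\alpha^\theta))}\to\infty$.
\item No sharper estimate rescues the stated rate. By Jensen, $\mathbb{E}(p_N^\theta)=(\alpha^\theta/(1+\alpha^\theta))^{N-1}\ge e^{-(N-1)/\alpha^\theta}$, so (\ref{eq:approx_k}) approximates in the wrong direction.
\item The lemma in fact fails as a literal $O$-statement. The terms with some $K_i>N$ have $\pi_N^{\mathbb{M}}(K,J\mid y,x)=0$. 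They therefore contribute exactly $P_\infty(\max_i K_i>N)\ge(\alpha^\theta/(1+\alpha^\theta))^{N}$ to the left-hand side. For fixed $n$ this is not $O(e^{-(N-1)/\alpha^\theta})$ once the $M_k$ grow fast enough to make the second term negligible.
\end{itemize}

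The paper reaches the displayed rate only through the heuristic ``$\approx$'' steps (\ref{eq:approx_expec})--(\ref{eq:final_bound}), so you cannot close this gap. What your argument genuinely proves, and what you should state instead, is
\[
\text{LHS}\le 4n\Big[r_\theta^{N-1}+\max_{k<N} r_k^{M_k-1}\big(1-r_\theta^{N-1}\big)\Big],\qquad r_\theta=\frac{\alpha^\theta}{1+\alpha^\theta},\quad r_k=\frac{\alpha_k^{\psi|\theta}}{1+\alpha_k^{\psi|\theta}}.
\]
This is the lemma with $1/\alpha$ replaced by $\log(1+1/\alpha)$ in both exponents. It follows from three facts:
\begin{itemize}
\item both error terms are at most $\sum_{K,J}|\pi_N^{\mathbb{M}}(K,J)-\pi_\infty(K,J)|\le 2\,\pi_n^{\mathbb{M}}(C^c)$;
\item $\pi_n^{\mathbb{M}}(C^c)\le n\,\mathbb{E}\big[p_N^\theta+\sum_{k<N}p_k^\theta p_{M_k|k}^\psi\big]$, by $1-(1-a)^n\le na$;
\item the $\theta$-sticks are independent of the $\psi$-sticks.
\end{itemize}
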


\noindent where $K$ and $J$ are the vector of the clustering variables and $K_r=\{1,2,\ldots r\}^n$ and $J_q=\{1,2,\ldots q\}^n$ for any $r,q\in \mathbb{Z}^+$.

\begin{proof}

 Before proving the lemma, note that for $K\in K_\infty-K_N$, $\pi_{N}^\mathbb{M}(.)=0$; $J\in J_\infty-J_{M_K}$, $\pi_{N}^{M}(.)=0$.
 
 To prove the lemma, we first note the following equality.

\begin{equation}\label{eq:split for lemma2}
\begin{split}
& \sum_{K\in K_\infty}\sum_{J\in J_\infty}\Big|\pi_{N}^\mathbb{M}(K,J|y,X)-\pi_{\infty}(K,J|y,x)\Big|\\
& = \sum_{K\in K_N}\sum_{J\in J_{M_K}}\Big|\pi_{N}^\mathbb{M}(K,J|y,X)-\pi_{\infty}(K,J|y,x)\Big|+\sum_{K\in K_N}\sum_{J\in J_\infty-J_{M_K}}\pi_{\infty}(K,J|y,x)+\\&\sum_{K\in K_\infty-K_N}\sum_{J\in J_{M_K}}\pi_{\infty}(K,J|y,x).\\
\end{split}    
\end{equation}

\noindent  Since
\begin{equation*}
\begin{split}
\pi_{N}^\mathbb{M}(K,J|y,X)=\dfrac{P_N^\mathbb{M}(K,J)m_N^\mathbb{M}(y,x|K,J)}{m_N^\mathbb{M}(y,x)},
\end{split}   
\end{equation*}

\noindent where $P_N^\mathbb{M}(K,J)$ is the prior for $(K,J)$ under $P_N^\mathbb{M}$ and $$m_N^\mathbb{M}(y,x|K,J)=\prod_{k\in K^*,j_k\in J_K^*}\Big[\int\Big(\prod_{i:\{K_i=k,J_{K_i}=j_k\}}f(y_i|x_i,\theta)f(x_i|\psi)\Big)dP_0(\theta,\psi)\Big].$$

\noindent where $K^*$ and $J^*$ are the sets of unique $K$ and $J$ values.

\noindent Then 
\begin{equation}\label{eq:split_pi-diff}
\begin{split}
& \sum_{K\in K_N}\sum_{J\in J_{M_K}}\Big|\pi_{N}^\mathbb{M}(K,J|y,X)-\pi_{\infty}(K,J|y,x)\Big|\\
& = \sum_{K\in K_N}\sum_{J\in J_{M_K}}\Big|\dfrac{P_N^\mathbb{M}(K,J)m_N^\mathbb{M}(y,x|K,J)}{m_N^\mathbb{M}(y,x)}-\dfrac{P_\infty(K,J)m_\infty(y,x|K,J)}{m_\infty(y,x)}\Big|\\
& = \sum_{K\in K_N}\sum_{J\in J_{M_K}}\Big|\dfrac{P_N^\mathbb{M}(K,J)m_N^\mathbb{M}(y,x|K,J)}{m_N^\mathbb{M}(y,x)}-\dfrac{\{P_\infty(K,J)-P_N^\mathbb{M}(K,J)\}m_N^\mathbb{M}(y,x|K,J)}{m_\infty(y,x)}\\&-\dfrac{P_N^\mathbb{M}(K,J)m_N^\mathbb{M}(y,x|K,J)}{m_\infty(y,x)}\Big|\\
& \leq \sum_{K\in K_N}\sum_{J\in J_{M_K}}\dfrac{P_N^\mathbb{M}(K,J)m_N^\mathbb{M}(y,x|K,J)}{m_N^\mathbb{M}(y,x)}\Big|1-\dfrac{m_N^\mathbb{M}(y,x)}{m_\infty(y,x)}\Big|\\&+\sum_{K\in K_N}\sum_{J\in J_{M_K}}\Big|P_\infty(K,J)-P_N^\mathbb{M}(K,J)\Big|\dfrac{m_N^\mathbb{M}(y,x|K,J)}{m_\infty(y,x)}\\
& = \Big|1-\dfrac{m_N^\mathbb{M}(y,x)}{m_\infty(y,x)}\Big|\sum_{K\in K_N}\sum_{J\in J_{M_K}}\pi_N^\mathbb{M}(K,J|y,x)+\sum_{K\in K_N-K_{N-1}}\sum_{J\in J_{M_K}}\Big|P_\infty(K,J)-P_N^\mathbb{M}(K,J)\Big|\dfrac{m_N^\mathbb{M}(y,x|K,J)}{m_\infty(y,x)}\\&+\sum_{K\in K_{N-1}}\sum_{J\in J_{M_K}-J_{M_K-1}}\Big|P_\infty(K,J)-P_N^\mathbb{M}(K,J)\Big|\dfrac{m_N^\mathbb{M}(y,x|K,J)}{m_\infty(y,x)}\\
& \leq \Big|1-\dfrac{m_N^\mathbb{M}(y,x)}{m_\infty(y,x)}\Big|+\sum_{K\in K_N-K_{N-1}}\sum_{J\in J_{M_K}}\Big|P_\infty(K,J)-P_N^\mathbb{M}(K,J)\Big|\dfrac{m_N^\mathbb{M}(y,x|K,J)}{m_\infty(y,x)}\\&+\sum_{K\in K_{N-1}}\sum_{J\in J_{M_K}-J_{M_K-1}}\Big|P_\infty(K,J)-P_N^\mathbb{M}(K,J)\Big|\dfrac{m_N^\mathbb{M}(y,x|K,J)}{m_\infty(y,x)}.\\
\end{split}    
\end{equation}

\noindent where the last line follows since $K\in K_{N-1}$ and for each $K$, for $J\in J_{M_K-1}$, $P_\infty = P_N^\mathbb{M}$.

From $(\ref{eq:split_pi-diff})$,

\begin{equation}\label{eq:int_first-part}
\begin{split}
& \int_{\mathbb{R}^n\times \mathbb{R}^n}\sum_{K\in K_N}\sum_{J\in J_{M_K}}\Big|\pi_{N}^\mathbb{M}(K,J|y,X)-\pi_{\infty}(K,J|y,x)\Big|m_\infty(y,x) d(y,x)\\
& \leq \int\Big|m_\infty(y,x)-m_N^\mathbb{M}(y,x)\Big|d(y,x)+\int \sum_{K\in K_N-K_{N-1}}\sum_{J\in J_{M_K}}\Big|P_\infty(K,J)-P_N^\mathbb{M}(K,J)\Big|m_N^\mathbb{M}(y,x|K,J)d(y,x)\\& + \int \sum_{K\in K_{N-1}}\sum_{J\in J_{M_K}-J_{M_K-1}}\Big|P_\infty(K,J)-P_N^\mathbb{M}(K,J)\Big|m_N^\mathbb{M}(y,x|K,J)d(y,x)\\
& \leq \int\Big|m_\infty(y,x)-m_N^\mathbb{M}(y,x)\Big|d(y,x)+2 \int \sum_{K\in K_\infty}\sum_{J\in J_\infty}\Big|P_\infty(K,J)-P_N^\mathbb{M}(K,J)\Big|m_N^\mathbb{M}(y,x|K,J)d(y,x)\\
& = \int\Big|m_\infty(y,x)-m_N^\mathbb{M}(y,x)\Big|d(y,x)+4 D(P_\infty,P_N^\mathbb{M}).\\
\end{split}    
\end{equation}

\noindent Thus, by Theorem $\ref{thm:inequality}$, 
\begin{equation}\label{eq:rate}
\begin{split}
& \int_{\mathbb{R}^n\times \mathbb{R}^n}\sum_{K\in K_N}\sum_{J\in J_{M_K}}\Big|\pi_{N}^\mathbb{M}(K,J|y,X)-\pi_{\infty}(K,J|y,x)\Big|m_\infty(y,x) d(y,x)\\ & = O\Big(n \Big[\exp\Big(-\frac{N-1}{\alpha^\theta}\Big)+\exp\Big(-\min_k\Big\{\frac{M_k-1}{\alpha_k^{\psi|\theta}}\Big\}\Big)\Big)\Big(1-\exp{\Big(-\frac{N-1}{\alpha^\theta}\Big)}\Big)\Big]\Big)\\    
\end{split}    
\end{equation}
 
\noindent Using a similar argument for the other two terms in $(\ref{eq:split for lemma2})$, we see that the integral of both of the terms with respect to $m_\infty$ can be written as the RHS of $(\ref{eq:int_first-part})$.

Thus, together with equation $(\ref{eq:rate})$, Lemma $\ref{lem:L1}$ follows.

\end{proof}

\subsection{How to choose $N$ and $M_k$}

We propose an approach to choose the truncation bounds, $N$ and $M_k$'s to ensure the truncation approximation is sufficiently accurate based on Theorem $\ref{thm:inequality}$.

Given an overall maximum error value and an error value corresponding to only the $\theta$ clusters $\epsilon_\theta$, we use the first part of the RHS of the Theorem $\ref{thm:inequality}$ result to determine the choice of $N$ similar to \cite{ishwaran2001gibbs}. For the choice of $M_k$, based on each $\alpha_k^{\psi|\theta}$, we first find the corresponding $\tilde{M_k}$ value that would satisfy the desired bound. Let $k^*$ be defined as the $k$ value corresponding to the $\min_k\frac{\tilde{M_k}-1}{\alpha_k^{\psi|\theta}}$. We then choose the remaining $M_k$'s such that they are close to the $\dfrac{M_{k^*}-1}{\alpha_{k^*}^{\psi|\theta}}$ value.

\noindent In particular the equations we use to compute the integer values of $N$ and $M_k$'s are

\begin{equation}\label{eq:cluster_values}
\begin{split}
& n\Big[\exp\Big(-\dfrac{N-1}{\alpha^\theta}\Big)\Big]=\epsilon_\theta\\
& n\exp\Big(-\Big\{\frac{\tilde{M_k}-1}{\alpha_k^{\psi|\theta}}\Big\}\Big)\Big)\Big(1-\dfrac{\epsilon_\theta}{n}\Big)=\epsilon-\epsilon_\theta\hspace{1cm}\text{for each }k\\
& k^* = \arg\min_{1\leq k\leq N}\frac{\tilde{M_k}-1}{\alpha_k^{\psi|\theta}}\\
& M_{k^*} = \tilde{M}_{k^*}\\
& M_k=\ceil{\Big\{1+\Big(\dfrac{\alpha_k^{\psi|\theta}}{\alpha_{k^*}^{\psi|\theta}}\Big)(M_{k^*}-1)\Big\}}\hspace{1cm}\forall k\neq k^*.
\end{split}    
\end{equation}

\noindent The choice of $N$ and $\{M_k\}_k$ in $(\ref{eq:cluster_values})$ is driven by the upper bound in Theorem $\ref{thm:inequality}$. We first choose $N$ so that the contribution of the $\theta$ level truncation to the overall error is controlled by the $\epsilon_\theta$, following the same idea as in \cite{ishwaran2001gibbs}. Then, for each $k$, we compute a preliminary value $\tilde{M}_k$ that would make the $\psi$ level contribution satisfy the remaining error budget $\epsilon-\epsilon_\theta$. Let $k^*$ denote the index corresponding to the smallest value of $(\tilde{M}_k-1)/\alpha^{k}_{\psi\mid\theta}$. We set $M_{k^*}=\tilde{M}_{k^*}$, and choose the remaining $M_k$'s so that $(M_k-1)/\alpha^{k}_{\psi\mid\theta}$ is approximately matched across clusters. In this way, the truncation error contributed by each $\psi$ subcluster is kept at a comparable level.

\begin{Remark}

The calculation of $N$ and $M_k$'s depends on the values of the $\alpha$'s. We propose to use VB to estimate the values of $\alpha$ here. We assess this approach in Sections $\ref{sec:sim}$ and $\ref{sec:application}$.
\end{Remark}

\subsection{Comparison with the constant truncation bound}

We now compare the truncation with $M_K=M$ from \cite{burns2023truncation} vs $M_k\neq M$. We note that the Theorem $\ref{thm:inequality}$ holds in both scenarios. Hence, for $M_k\neq M$ case, the RHS of $(\ref{eq:ineqality_thm})$ satisfies  

\begin{equation}\label{eq:comp1}
     n\exp\Big(-\min_k\Big\{\frac{M_k-1}{\alpha_k^{\psi|\theta}}\Big\}\Big)\Big)\Big(1-\dfrac{\epsilon_\theta}{n}\Big)\leq\epsilon-\epsilon_\theta.
\end{equation}

\noindent For $M_k=M$, we have 

\begin{equation}\label{eq:comp2}
     n\exp\Big(-\Big\{\frac{M-1}{\alpha^{\psi|\theta}}\Big\}\Big)\Big)\Big(1-\dfrac{\epsilon_\theta}{n}\Big)\leq\epsilon-\epsilon_\theta.
\end{equation}
From $(\ref{eq:cluster_values})$, $(\ref{eq:comp2})$is equivalent to $(\ref{eq:comp1})$, only for $k=k^*$. Also from $(\ref{eq:cluster_values})$, for any $k$, $$M_k=\ceil{\Big\{1+\Big(\dfrac{\alpha_k^{\psi|\theta}}{\alpha_{k^*}^{\psi|\theta}}\Big)(M_{k^*}-1)\Big\}}\hspace{1cm}\forall k\neq k^*.$$ So we have
$$\Big\{1+\dfrac{\alpha_k^{\psi|\theta}}{\alpha_{k^*}^{\psi|\theta}}(M_{k^*}-1)\Big\}<M_k<\Big\{1+\Big(\dfrac{\alpha_k^{\psi|\theta}}{\alpha_{k^*}^{\psi|\theta}}\Big)(M_{k^*}-1)\Big\}+1\hspace{1cm}\forall k\neq k^*$$

$$\frac{M_{k^*}-1}{\alpha^{\psi|\theta}_{k^*}}\leq\frac{M_k-1}{\alpha^{\psi|\theta}_k}<\frac{1}{\alpha^{\psi|\theta}_k}+\frac{M_{k^*}-1}{\alpha^{\psi|\theta}_{k^*}}$$ where $k^*$ corresponds to  $$\dfrac{M_k^*-1}{\alpha_{k^*}^{\psi|\theta}}=\arg\min_k \dfrac{\tilde{M}_k-1}{\alpha_k^{\psi|\theta}}.$$ However, since $M_k^*$ also satisfies $(\ref{eq:comp2})$, by the specification of $M$, we can conclude $M_k\leq M_k^*\leq M$, i.e. by assuming variable cluster size, we can reduce (and tighten) the number of total clusters.

Table $\ref{tab:N-M table}$ shows the values $N$ and $M_k$'s using $(\ref{eq:cluster_values})$ for different $\alpha$ values $(\alpha^\theta \text{ and }\alpha^{\psi|\theta})$ with the error level for the $\theta$ clusters kept at $0.001$ and the overall error at $0.01$. Using values of $N$ and $M_k$'s derived from $(\ref{eq:cluster_values})$, we see that increasing $\alpha^\theta$ increases the number of $\theta$ clusters. And increasing the ratio of $\dfrac{\alpha^{\psi|\theta}}{\alpha^\theta}$ increases the number of $\psi$ clusters.

 \begin{table}[h!]
     \centering
     \begin{tabular}{|c|c|c|c|}
     \hline
         $\alpha^\theta$ & $\alpha^{\psi|\theta}$ & $n=200$ & $n=1000$ \\
         \hline
         $0.5$ & $(0.5,0.5,0.5, \ldots)$ & $N=8, M=(6,6,6, \ldots)$ & $N=8, M=(7,7,7, \ldots)$\\
         $0.5$ & $(0.5,1,1.5, \ldots)$ & $N=8, M=(6,11,16, \ldots)$ & $N=8, M=(7,13,19, \ldots)$\\
         $1$ & $(0.5,1.5,3,\ldots)$ & $N=14, M=(6,16,31, \ldots)$ & $N=15, M=(7,19,36, \ldots)$\\
         $3$ & $(0.5,1.5,3, \ldots)$ & $N=38, M=(6,16,31, \ldots)$ & $N=36, M=(7,19,36, \ldots)$\\
         \hline
     \end{tabular}
     \caption{Truncation values of $N$ and $M$ corresponding to an error of $0.001$ for $\theta$ clusters and an overall error of $0.01$}
     \label{tab:N-M table}
 \end{table}

One of the key advantage of using this finer truncation approximation is the computational efficiency. Varying the truncation bounds, the computation complexity is proportional to $O(\sum_{i=1}^N M_k)$; and for fixed truncation bounds as in \cite{burns2023truncation}, it is proportional to $O(NM)$. Since $M_k\leq M\hspace{0.1cm}\forall k$, computational complexity is reduced using varying truncation bounds.

\subsection{Blocked Gibbs Sampler}

The truncation approximation allows for a Blocked Gibbs Sampler (BGS) to work. The specification of $Y$ and $X$ models determine the form of the full conditionals for their parameters. We provide details in the simulation in Section $\ref{sec:sim}$. However the conditionals of $V$ and $\alpha^\theta$, $\alpha^{\psi|\theta}$ have the same form for all the EDPM specifications and are given below.

    $$V^\theta_k|\alpha^\theta \stackrel{ind}{\sim}Beta(n_k+1,\alpha^\theta + \sum_{i=k+1}^N n_i),$$
    where $n_k$ is the number of observation currently assigned to the $k^{th}$ $\theta$ cluster.
      $$V^\psi_{j|k}|\alpha^{\psi|\theta}_k \stackrel{ind}{\sim}Beta(n_{kj}+1,\alpha^{\psi|\theta}_k + \sum_{i=j+1}^{M_k} n_{ki}),$$
    where $n_{kj}$ is the number of observations currently assigned to the $j^{th}$ $\psi$ sub-cluster of the $k^{th}$ $\theta$ cluster.
     $$\alpha^\theta \sim Gamma(N-1,-\sum_{k=1}^N\log(1-V_k^\theta))$$
     $$\alpha_k^{\psi|\theta} \sim Gamma(M_k-1,-\sum_{j=1}^{M_k}\log(1-V_{j|k}^\psi)).$$

For details on our specification for the simulation example, see Section $\ref{sec:gibbs_dist_spec}$.

\section{Simulation to assess the mixing of the BGS based on the truncation approximation}\label{sec:sim}
We assess the mixing of the BGS algorithm for arbitrarily large $N,M_k$; $N,M_k$ chosen as in \cite{burns2023truncation} but using VB estimates of $\alpha$'s and $N,M_k$ computed from $(\ref{eq:cluster_values})$ again using VB estimates for the $\alpha$'s. We compare the methods by assessing the mixing of the BGS algorithm for estimating $E(Y|X)$ using batch means from \cite{burns2023truncation}. We simulate $100$ replicated datasets with $n=200$. We vary the dimension of $X$ as $5,10,15$ to see the effect of dimension.

We will generate the data (from an EDPM) as follows:

\begin{equation}\label{Eq:EDPM_setup_sim}
\begin{split}
& V_k^\theta \stackrel{iid}{\sim} Beta (1,\alpha_\theta)\hspace{0.5cm}1\leq k\leq N(=10)-1; \hspace{0.2cm} \text{Set } V_0^\theta = 0,\hspace{0.1cm}V_N^\theta = 1,\\
& V_k^\psi \stackrel{iid}{\sim} Beta (1,\alpha_\psi)\hspace{0.5cm}1\leq k\leq M(=6)-1; \hspace{0.2cm} \text{Set } V_0^\psi = 0, \hspace{0.1cm}V_M^\psi = 1\\
& \theta\sim N_n(\mu_\theta,\sigma_\theta^2I_n)\\
& \psi\sim N_m(\mu_\psi,\sigma_\psi^2I_n)\\
& p_k^{\theta} = V_k^{\theta}\prod_{j=0}^{k-1}(1-V_j^{\theta})\hspace{0.5cm}1\leq k\leq N \hspace{0.2cm}(\text{Then} \sum_{k=1}^N p_k^\theta=1)\\
& p_k^{\psi} = V_k^{\psi}\prod_{j=0}^{k-1}(1-V_j^{\psi})\hspace{0.5cm}1\leq k\leq M_i(=M) \hspace{0.2cm}(\text{Then} \sum_{k=1}^{M_i} p_k^\theta=1)\\
& z \sim Multinomial (n,p^\theta)\\
& z_{.|i} \sim Multinomial (m,p^\psi)\\
& X_{ij}\stackrel{iid}{\sim} N(\psi_{z_{i|i}},\sigma^2)\\
& X_i = (X_{i1}, X_{i2}, \ldots X_{im})\\
& Y_i|X_i, \theta_i \sim f(y|x,\theta)\\
\end{split}    
\end{equation}

\noindent $(\theta_z,\psi_{.|.})$ follows a EDP distribution with parameters $(\alpha_\theta,\alpha_\psi,p^\theta \times p^\psi, \mu_\theta, \sigma_\theta,\mu_\psi, \sigma_\psi)$.

\noindent We sampled $\alpha_\theta \sim \Gamma(1,1)$ and $\alpha_\psi \sim \Gamma(1,1)$and set $\sigma=\frac{1}{4}$ and $n=200$ and considered $dim(X)\in \{5,10,15\}$. For the simplicity, we set $M_k=M$ $\forall k$.

\vspace{0.2cm}
    
\noindent At each MCMC iteration, $E(Y|X)$ has the following form:

\begin{equation}\label{eq:expec}
\begin{split}
    E(Y|X) & =E(E(Y|X,\theta_{z_1},\ldots \theta_{z_N}))\\
    & = \sum_{k=1}^N \sum_{j=1}^{M_k} \frac{p^\theta_ip^\psi_{j|i}E(Y_i|X_i,\theta_{z_i})f(X_i|\psi_{z_{i|j}})}{\sum_i\sum_j p^\theta_ip^\psi_{j|i}f(X_i|\psi_{z_{i|j})}}.\\
\end{split}    
\end{equation}
    
\noindent We consider two different specifications of $f(y|x,\theta)$ in $(\ref{Eq:EDPM_setup_sim})$.

We collect $400$ batches of sampled values, each of size $100$. For each batch we computed the $0.25$, $0.75$ quantile and the mean. We then computed the mean value and standard deviation over the $400$ batches for each of the summary statistics. The averaged value over $i$ (over the batches) is the (averaged) mean value and (averaged) standard deviation value we report here. A smaller SD signifies better mixing (\cite{ishwaran2001gibbs}).

\noindent The models within the EDPM, based on which we will derive the BGS posterior distributions, are as follows:

\begin{equation*}
\begin{split}
    Y_i|X_i,\theta_i \sim N(X_i^T\mu_{\theta_i},\sigma_{\theta_i}^2)\\
    x_{il}|\psi_{il} \sim N(\mu_{\psi_{il}}, \sigma_{\psi_{il}}^2)~\forall l\in \{1,\ldots p\}
\end{split}    
\end{equation*}

\subsection{Scenario I}

For the first scenario, we use the following form for $f(y|x,\theta)$ of $(\ref{Eq:EDPM_setup_sim})$ in the EDPM, 

\begin{equation*}\label{eq:setup}
\begin{split}
& Y_i|X_i \sim \lambda_{X_i} N(X_i\theta_z, \sigma^2)+(1-\lambda_{X_i})t_{2X_i\theta_z}\\
& \lambda_{X_i}=\dfrac{\omega_1\exp\{-\frac{\omega_1}{2}(X_{i,1}-\mu_1)^2\}}{\omega_1\exp\{-\frac{\omega_1}{2}(X_{i,1}-\mu_1)^2\}+\omega_2\exp\{-\frac{\omega_2}{2}(X_{i,1}-\mu_2)^2\}}\\
&\omega_1=\omega_2=2,\\
\end{split}
\end{equation*}

\noindent where $X,\theta,z$ are generated using $(\ref{Eq:EDPM_setup_sim})$. $t_{2X_i\theta_z}$ is a $t$ distribution with df = $2X_i\theta_z$, the true model will be a mixture of this complicated mixture distribution. Note that the mixture proportion depends on only the first covariate.

We quantify the mixing of $E(Y|X)$ of $(\ref{eq:expec})$, for $X$ generated from $(\ref{Eq:EDPM_setup_sim})$ for different dimensions in Table $\ref{tab:tab_100}$. 

The batch means for all three cases are the same. However, we generally observe smaller batch SD for $N$ and $M_k$ using $(\ref{eq:cluster_values})$ Thus the mixing appears better and the computational time is faster than using either large $N$ and $M_k$ or $M_k=M$.

\begin{table}[]
    \centering
    \begin{tabular}{|c|c|c|c|c|c|c|c|}
    \hline
    & & \multicolumn{2}{|c|}{$N$,$M_k$ from $(\ref{eq:cluster_values})$} & \multicolumn{2}{|c|}{Large $N$,$M_k$ } & \multicolumn{2}{|c|}{$M_k=M$}\\ \cline{3-8}
    \hline
    dim of $X$ & Statistic & Mean & SD & Mean & SD & mean & SD\\
    \hline
     5 & $0.25$ quantile & 2.25 & 0.020 & 2.25 & 0.021 & 2.25 & 0.026\\
     & mean & 2.40 & 0.015 & 2.40 & 0.017 & 2.40 & 0.021\\
     & $0.75$ quantile & 2.52 & 0.023 & 2.52 & 0.021 & 2.52 & 0.024\\
     \hline
     10 & $0.25$ quantile & 2.39 & 0.013 & 2.39 & 0.016 & 2.39 & 0.018 \\
     & mean & 2.56 & 0.013 & 2.56 & 0.014 & 2.56 & 0.019\\
     & $0.75$ quantile & 2.81 & 0.019 & 2.81 & 0.018 & 2.81 & 0.020\\
     \hline
     15 & $0.25$ quantile & 2.49 & 0.007 & 2.49 & 0.009 & 2.49 & 0.012\\
     & mean & 2.64 & 0.010 & 2.64 & 0.009 & 2.64 & 0.011\\
     & $0.75$ quantile & 2.89 & 0.010 & 2.89 & 0.013 & 2.89 & 0.012\\
     \hline
    \end{tabular}
    \caption{Mean value and SD for $400$ batches with $100$ iterations to calculate $3$ different quantiles for the distribution of $E(Y|X)$. The mean and SD is computed over $100$ replicated datasets for scenario I.}
    \label{tab:tab_100}
\end{table}

\newpage
\subsection{Scenario II}

For the second scenario, we assume the form of $f(y|x,\theta)$ in $(\ref{Eq:EDPM_setup_sim})$ to be

\begin{equation*}\label{eq:setup2}
\begin{split}
& Y_i|X_i \sim N(X_i\theta_z, \sigma^2);\\
\end{split}
\end{equation*}

\begin{table}[]
    \centering
    \begin{tabular}{|c|c|c|c|c|c|c|c|}
    \hline
    & & \multicolumn{2}{|c|}{$N,$ $M_k$ from $(\ref{eq:cluster_values})$} & \multicolumn{2}{|c|}{Large $N,$ $M_k$} & \multicolumn{2}{|c|}{$M_k=M$}\\ \cline{3-8}
    \hline
    dim of $X$ & Statistic & Mean & SD & Mean & SD & Mean & SD\\
    \hline
    5 & $0.25$ quantile & 1.85 & 0.014 & 1.85 & 0.016 & 1.85 & 0.018\\
     & mean & 1.97 & 0.013 & 1.98 & 0.016 & 1.97 & 0.019\\
     & $0.75$ quantile & 2.16 & 0.019 & 2.13 & 0.017 & 2.16 & 0.021\\
     \hline
     10 & $0.25$ quantile & 1.91 & 0.011 & 1.91 & 0.013 & 1.91 & 0.016\\
     & mean & 2.01 & 0.007 & 2.01 & 0.010 & 2.01 & 0.011\\
     & $0.75$ quantile & 2.19 & 0.015 & 2.19 & 0.013 & 2.19 & 0.018\\
     \hline
     15 & $0.25$ quantile & 1.92 & 0.008 & 1.92 & 0.008 & 1.92 & 0.010\\
     & mean & 2.03 & 0.008 & 2.03 & 0.009 & 2.03 & 0.012\\
     & $0.75$ quantile & 2.20 & 0.012 & 2.20 & 0.014 & 2.20 & 0.015\\
     \hline
    \end{tabular}
    \caption{Mean value and SD for $400$ batches with $100$ iterations to calculate $3$ different quantiles for the distribution of $E(Y|X)$. The mean and SD is computed over $100$ replicated datasets for scenario II.}
    \label{tab:tab_100_true_nomix}
\end{table}

\noindent The results can be found in the Table $\ref{tab:tab_100_true_nomix}$. Similar to the first scenario, we see the equality in estimation of the batch means for all the choices of $N$ and $M_k$; however, lower (though the difference is less than the first setup) SD for $N$ and $M_k$ chosen from $(\ref{eq:cluster_values})$ for almost all quantiles indicates the MCMC algorithm mixes better than using arbitrarily large $N$ and $M_k$.

\clearpage

\subsection{Posterior distributions for Block Gibbs Sampler}\label{sec:gibbs_dist_spec}

The full conditional distributions for the parameters of $Y|X$ and $X$  models for the EDPM specifications in the simulations and the data example are given below.

     $$\mu_\theta\sim N((X^TX+C_y)^{-1}(X^TY+C_y\mu_0),\sigma_\theta^2(X^TX+C_y)^{-1})$$
    where $\mu_0$ is the initial value for $\mu$.
    $$\mu_{\psi_{j|k},l}\sim N((\frac{n_{kj}}{\sigma_\psi^2}+\frac{1}{c_{x,l}})^{-1}(\frac{\sum_{i|K_i=k,J_i=j}x_{il}}{\sigma_\psi^2}+\frac{m_l}{c_{x,l}}),(\frac{n_{kj}}{\sigma_\psi^2}+\frac{1}{c_{x,l}}^{-1})$$ 
    where $K$ and $J$ are the cluster assignment vectors.
     $$\dfrac{1}{\sigma^2_\theta}\sim Gamma \Big(\frac{N}{2},\sum_{i}\frac{(y_i-\mu_\theta)^2}{2}\Big)$$
     $$\dfrac{1}{\sigma_\psi^2}\sim Gamma \Big(\frac{M_k}{2},\sum_{i}\frac{(y_i-\mu_{\psi_{j|k},l})^2}{2}\Big)$$
     And finally for the distribution of the cluster parameters $K$ and $J$, For each observation $i$ and a $\theta,\psi$ cluster choice, the probability the observation $i$ is in the $\theta$ cluster $k$ and $\psi$ cluster $j$ is $$\dfrac{p_k^\theta p_{j|k}^\psi}{\sqrt{\sigma_\theta \sigma_\psi^{M_k}}}\exp{-\dfrac{1}{2}\Big[\dfrac{(y_i-(X\mu_\theta)_i)^2}{\sigma_\theta^2}+\sum_{j=1}^{M_k}\dfrac{(x_{il}-\mu_{\psi_{j|k},l})^2}{\sigma_\psi^2}\Big]}$$

\section{Application}\label{sec:application}

The Third National Health and Nutrition Examination Survey (NHANES III) is a series of national examination studies conducted by the National Center for Health Statistics, Centers for Disease Control and Prevention during $1988$ to $1994$.\nocite{NHIII} We will work with the phase 1 data of this survey conducted between $1988$ and $1991$. Our goal is to fit a model to the number of years for a person to reach a fatal CVD event without any prior history. For illustration, we focus only on deaths (and remove the censored).

The predictor variables are the current age of subject at the time of survey(AGE), c-reactive protein in blood (CRP), race ($1=$White, $2= $Black), sex ($1=$Men, $2=$ Women) of subjects, BMI, waist hip ratio, blood pressure, (total and good) cholesterol, and glucose.

\subsection{Results from parametric AFT}

\begin{figure}[h!]
    \centering
    \includegraphics[width=0.9\linewidth]{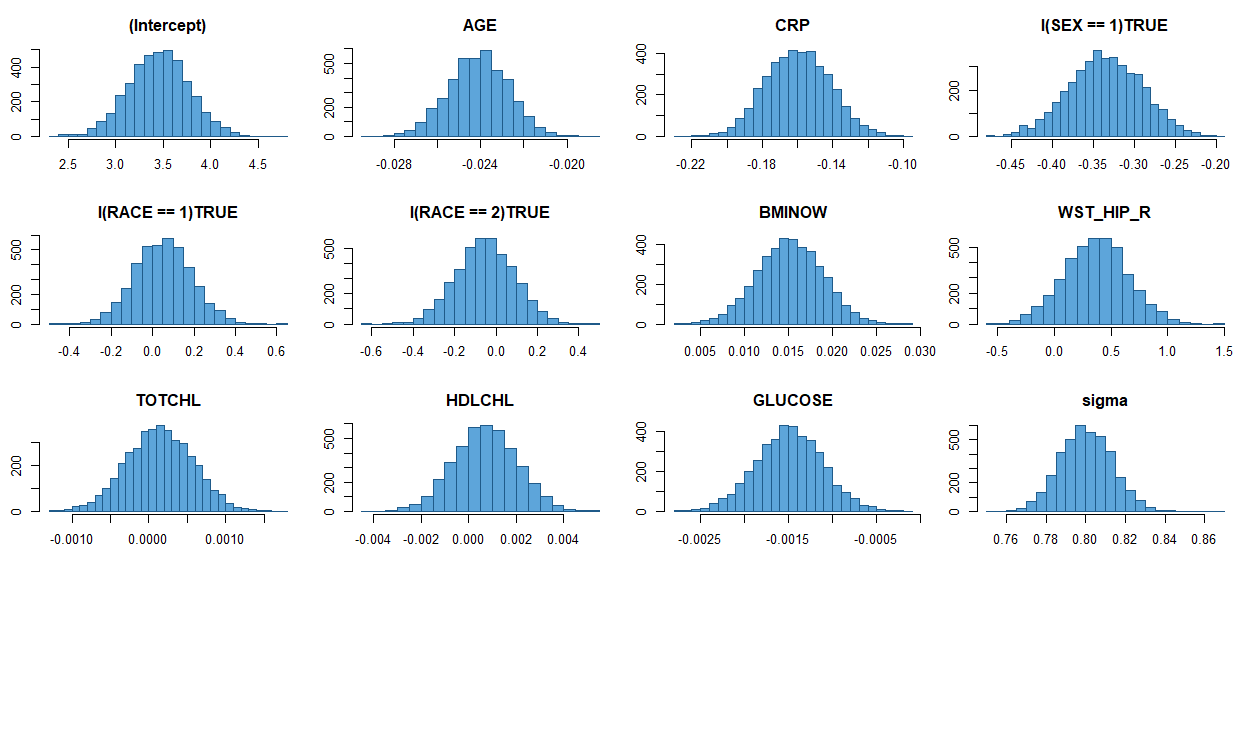}
    \caption{Posterior for regression coefficients of each covariates.}
    \label{fig:enter-label}
\end{figure}

\noindent First we fit a log-normal AFT model. Figure $\ref{fig:enter-label}$ shows the histogram of the posterior distribution of coefficient for each of the variables. Note that for covariates race, waist hip ratio, total cholesterol and HDL, the $95\%$ credible interval (CI) contains $0$. Hence, the time to CVD death is not statistically significant for these variables. However, for age, c-reactive protein, sex, BMI and blood glucose level, the $95\%$ CI does not contain $0$. Hence, we can infer these variables are statistically significant.

\noindent Next we will fit the data under three different EDPM truncations and compare to parametric AFT.

\subsection{Results for Truncated EDPM}

First we fit the EDPM using the truncation approach proposed here. To do this we need the estimated $\alpha$'s. Using VB, the estimated $\alpha$'s are $\alpha_\theta=0.09$, and $\alpha_\psi=(0.12,0.22)$.  Based on these $\alpha$ values, we set the truncation values $N=2$ Y-clusters and $M=(3,5)$ as X clusters. We will compare the parametric AFT and various truncation approximations based on the posterior mean of the rank correlations between observed and predicted.

\begin{figure}[h!]
    \centering
    \includegraphics[width=0.9\linewidth]{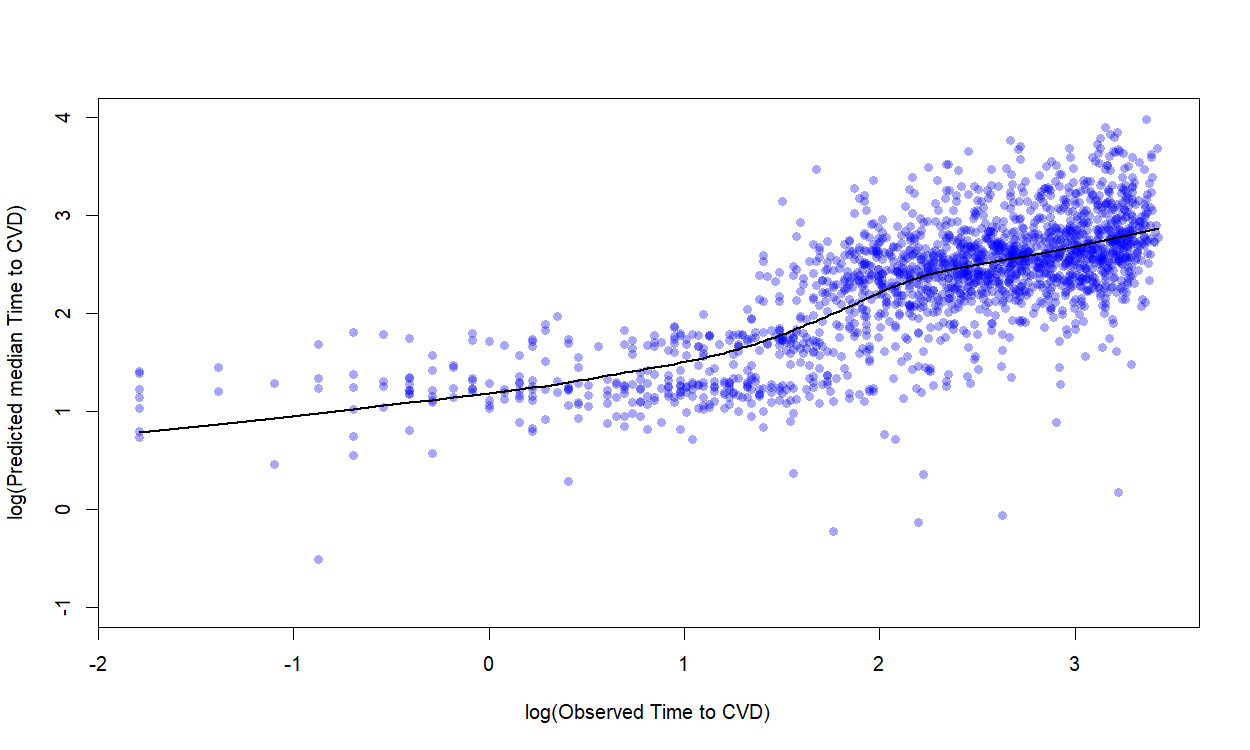}
    \caption{EDPM predicted vs observed time to CVD death on the logscale with truncation with varying number of  X-clusters in each Y-cluster and with $\alpha$'s chosen using VB.}
    \label{fig:EDPM}
\end{figure}

\noindent Figure $\ref{fig:EDPM}$ shows the EDPM fit of the data via a plot of observed vs predicted time to CVD death. The mean rank correlation between observed and predicted values is $0.75$ with a posterior standard deviation of $0.014$.

\begin{figure}[h!]
    \centering
    \includegraphics[width=0.9\linewidth]{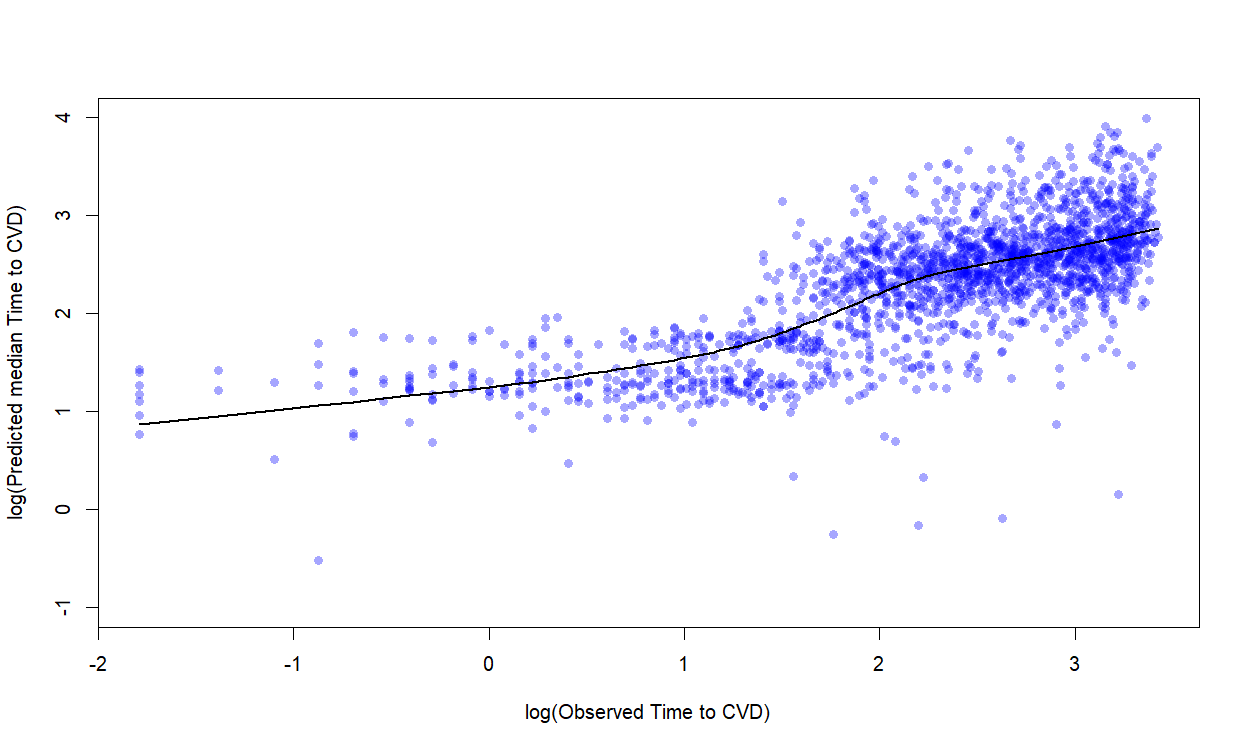}
    \caption{EDPM predicted vs observed time to CVD death on the logscale assuming $M_k=M$ $\forall$ $k$ and $\alpha$'s chosen by VB.}
    \label{fig:EDPM_natalie}
\end{figure}

\clearpage

Figure $\ref{fig:EDPM_natalie}$ shows the EDPM fit of the data via a plot of observed vs predicted time to CVD death with truncation values recommended in \cite{burns2023truncation}, with  $N=2$, $M=(5,5)$. The mean rank correlation is $0.73$ with a posterior standard deviation of $0.017$. The correlation is similar but lower than result in Figure \ref{fig:EDPM}.

For large truncation values, $N=5$, $M=(10,10,10,10,10)$, the mean rank correlation between observed and predicted values is $0.72$ with a posterior standard deviation of $0.015$. As expected, the computation time in this case is significantly higher than the previous one, but the correlation is very similar. We note that the mixing is worse with unnecessarily large truncation values.

\begin{figure}[h!]
    \centering
    \includegraphics[width=0.9\linewidth]{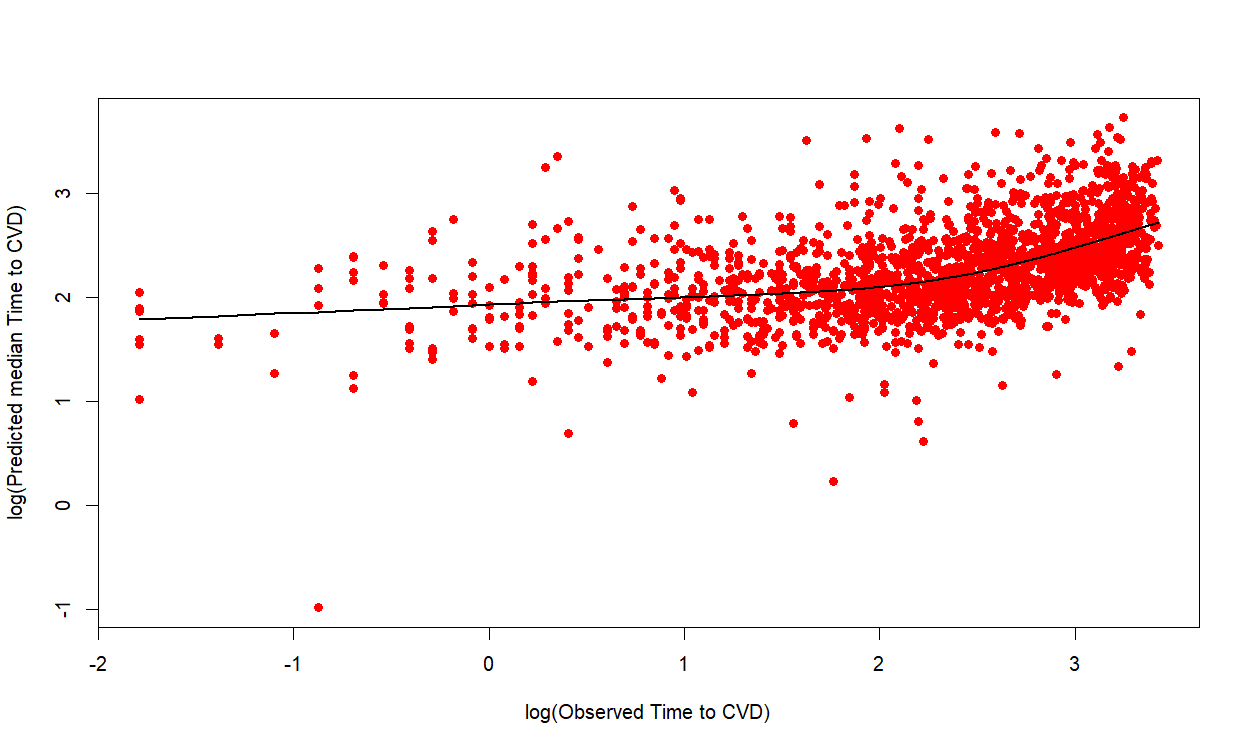}
    \caption{Lognormal AFT predicted vs observed time to CVD death on the logscale.}
    \label{fig:AFT logscale}
\end{figure}

\noindent Figure $\ref{fig:AFT logscale}$ shows the parametric lognormal AFT fit. The mean rank correlation is $0.55$ with a standard deviation of $0.019$.

\noindent From the Figures $\ref{fig:EDPM}$ - $\ref{fig:AFT logscale}$ and the posterior estimates of the rank correlation and standard deviations, all the truncated EDPMs outperform the parametric as expected. The finer truncation proposed here performs essentially the same as the other truncations but is more computationally efficient.

\clearpage

\section{Discussion}\label{sec:conclusion}

In this work, we develop a new EDPM truncation approximation and a Variational Bayes (VB) algorithm for a truncated EDPM. We first show the convergence of the truncated EDPM to the untruncated and how to assess the accuracy of the approximation. Then we calculate the bounds for the truncation using VB, given the desired error level. We develop a Block Gibbs Sampler (BGS) for posterior inference. We did a simulation study to assess the mixing of our truncation approximation. We concluded that choosing the truncation level based on VB mixes better than arbitrarily large truncation values. We also noted that using VB to calculate the truncation results in much faster computations than assuming arbitrarily large truncation values.

We applied the approach to the data from the third NHANES.  We compared the Enriched Dirichlet Process Mixture (EDPM) model with truncation values calculated using VB with an EDPM with arbitrarily large truncation value, EDPM with truncation values from \cite{burns2023truncation} using VB, and a classical log-normal AFT model. We concluded all the EDPMs performed better than the parametric AFT model. The three EDPM models have similar performances, but the EDPM with truncations values that vary by Y-cluster and with $\alpha$'s calculated using VB is computationally most efficient.

For future work, we can extend the EDPMs to time-varying co-variates, which will broaden its applicability in survival analysis. We can also explore assessing the accuracy of the truncation approximation in DDP-GPs from \cite{Xu_2016}.

\section*{Acknowledgement}

Bhadra and Daniels were partially supported by NIH R$01$ $166324$.

\newpage
\bibliographystyle{plainnat}
\bibliography{people}

\end{document}